
\documentclass[journal]{IEEEtran}
%
% If IEEEtran.cls has not been installed into the LaTeX system files,
% manually specify the path to it like:
% \documentclass[journal]{../sty/IEEEtran}

% Some very useful LaTeX packages include:
% (uncomment the ones you want to load)

% *** MISC UTILITY PACKAGES ***
%
%\usepackage{ifpdf}
% Heiko Oberdiek's ifpdf.sty is very useful if you need conditional
% compilation based on whether the output is pdf or dvi.
% usage:
% \ifpdf
%   % pdf code
% \else
%   % dvi code
% \fi
% The latest version of ifpdf.sty can be obtained from:
% http://www.ctan.org/pkg/ifpdf
% Also, note that IEEEtran.cls V1.7 and later provides a builtin
% \ifCLASSINFOpdf conditional that works the same way.
% When switching from latex to pdflatex and vice-versa, the compiler may
% have to be run twice to clear warning/error messages.

% *** CITATION PACKAGES ***
%
\usepackage{cite}
% cite.sty was written by Donald Arseneau
% V1.6 and later of IEEEtran pre-defines the format of the cite.sty package
% \cite{} output to follow that of the IEEE. Loading the cite package will
% result in citation numbers being automatically sorted and properly
% "compressed/ranged". e.g., [1], [9], [2], [7], [5], [6] without using
% cite.sty will become [1], [2], [5]--[7], [9] using cite.sty. cite.sty's
% \cite will automatically add leading space, if needed. Use cite.sty's
% noadjust option (cite.sty V3.8 and later) if you want to turn this off
% such as if a citation ever needs to be enclosed in parenthesis.
% cite.sty is already installed on most LaTeX systems. Be sure and use
% version 5.0 (2009-03-20) and later if using hyperref.sty.
% The latest version can be obtained at:
% http://www.ctan.org/pkg/cite
% The documentation is contained in the cite.sty file itself.
\usepackage{hyperref}

% *** GRAPHICS RELATED PACKAGES ***
%
\ifCLASSINFOpdf
  \usepackage[pdftex]{graphicx}
  % declare the path(s) where your graphic files are
  \graphicspath{{./img/}}
  % and their extensions so you won't have to specify these with
  % every instance of \includegraphics
  \DeclareGraphicsExtensions{.pdf,.jpeg,.png}
\else
  % or other class option (dvipsone, dvipdf, if not using dvips). graphicx
  % will default to the driver specified in the system graphics.cfg if no
  % driver is specified.
  \usepackage[dvips]{graphicx}
  % declare the path(s) where your graphic files are
  \graphicspath{{./eps/}}
  % and their extensions so you won't have to specify these with
  % every instance of \includegraphics
  \DeclareGraphicsExtensions{.eps}
\fi
% graphicx was written by David Carlisle and Sebastian Rahtz. It is
% required if you want graphics, photos, etc. graphicx.sty is already
% installed on most LaTeX systems. The latest version and documentation
% can be obtained at:
% http://www.ctan.org/pkg/graphicx
% Another good source of documentation is "Using Imported Graphics in
% LaTeX2e" by Keith Reckdahl which can be found at:
% http://www.ctan.org/pkg/epslatex
%
% latex, and pdflatex in dvi mode, support graphics in encapsulated
% postscript (.eps) format. pdflatex in pdf mode supports graphics
% in .pdf, .jpeg, .png and .mps (metapost) formats. Users should ensure
% that all non-photo figures use a vector format (.eps, .pdf, .mps) and
% not a bitmapped formats (.jpeg, .png). The IEEE frowns on bitmapped formats
% which can result in "jaggedy"/blurry rendering of lines and letters as
% well as large increases in file sizes.
%
% You can find documentation about the pdfTeX application at:
% http://www.tug.org/applications/pdftex

% *** MATH PACKAGES ***
%
\usepackage{amsmath}
% A popular package from the American Mathematical Society that provides
% many useful and powerful commands for dealing with mathematics.
%
% Note that the amsmath package sets \interdisplaylinepenalty to 10000
% thus preventing page breaks from occurring within multiline equations. Use:
%\interdisplaylinepenalty=2500
% after loading amsmath to restore such page breaks as IEEEtran.cls normally
% does. amsmath.sty is already installed on most LaTeX systems. The latest
% version and documentation can be obtained at:
% http://www.ctan.org/pkg/amsmath
\usepackage{amsfonts}
\usepackage{amsthm}
\usepackage{color}

% *** SPECIALIZED LIST PACKAGES ***
%
%\usepackage{algorithmic}
% algorithmic.sty was written by Peter Williams and Rogerio Brito.
% This package provides an algorithmic environment fo describing algorithms.
% You can use the algorithmic environment in-text or within a figure
% environment to provide for a floating algorithm. Do NOT use the algorithm
% floating environment provided by algorithm.sty (by the same authors) or
% algorithm2e.sty (by Christophe Fiorio) as the IEEE does not use dedicated
% algorithm float types and packages that provide these will not provide
% correct IEEE style captions. The latest version and documentation of
% algorithmic.sty can be obtained at:
% http://www.ctan.org/pkg/algorithms
% Also of interest may be the (relatively newer and more customizable)
% algorithmicx.sty package by Szasz Janos:
% http://www.ctan.org/pkg/algorithmicx

% *** ALIGNMENT PACKAGES ***
%
\usepackage{array}
% Frank Mittelbach's and David Carlisle's array.sty patches and improves
% the standard LaTeX2e array and tabular environments to provide better
% appearance and additional user controls. As the default LaTeX2e table
% generation code is lacking to the point of almost being broken with
% respect to the quality of the end results, all users are strongly
% advised to use an enhanced (at the very least that provided by array.sty)
% set of table tools. array.sty is already installed on most systems. The
% latest version and documentation can be obtained at:
% http://www.ctan.org/pkg/array

% IEEEtran contains the IEEEeqnarray family of commands that can be used to
% generate multiline equations as well as matrices, tables, etc., of high
% quality.

% *** SUBFIGURE PACKAGES ***
\ifCLASSOPTIONcompsoc
  \usepackage[caption=false,font=normalsize,labelfont=sf,textfont=sf]{subfig}
\else
  \usepackage[caption=false,font=footnotesize]{subfig}
\fi
\usepackage{url}
% url.sty was written by Donald Arseneau. It provides better support for
% handling and breaking URLs. url.sty is already installed on most LaTeX
% systems. The latest version and documentation can be obtained at:
% http://www.ctan.org/pkg/url
% Basically, \url{my_url_here}.

% *** Do not adjust lengths that control margins, column widths, etc. ***
% *** Do not use packages that alter fonts (such as pslatex).         ***
% There should be no need to do such things with IEEEtran.cls V1.6 and later.
% (Unless specifically asked to do so by the journal or conference you plan
% to submit to, of course. )

% correct bad hyphenation here
\hyphenation{op-tical net-works semi-conduc-tor}

%-----------------------
% My personal packages and commands
\usepackage{bbm} % for indicator function 1

\theoremstyle{remark} \newtheorem{lemma}{Lemma}
\theoremstyle{remark} \newtheorem{definition}{Definition}
\theoremstyle{definition} \newtheorem{procedure}{Procedure}
\theoremstyle{plain} \newtheorem{theorem}{Theorem}
\theoremstyle{plain} 

\newcommand{\mm}{\mathrm{mm}}

\newcommand{\R}{\mathbb{R}}

\newcommand{\fate}{\mathbf{e}}
\newcommand{\fatu}{\mathbf{u}}
\newcommand{\fatv}{\mathbf{v}}
\newcommand{\fatw}{\mathbf{w}}
\newcommand{\fatx}{\mathbf{x}}
\newcommand{\faty}{\mathbf{y}}

\newcommand{\fatmu}{\boldsymbol{\mu}}
\newcommand{\E}{\mathrm{E}}

\newcommand{\Cov}{\mathrm{Cov}}
\newcommand{\Cor}{\mathrm{Cor}}
\newcommand{\Var}{\mathrm{Var}}
\newcommand{\vectorization}{\mathrm{vec}}
\newcommand{\FDR}{\mathrm{FDR}}
\newcommand{\TPR}{\mathrm{TPR}}

\newcommand{\tr}{\mathrm{tr}}

\DeclareMathOperator*{\argmax}{arg\,max}

%-----------------------

\begin{document}

% This is required to shorted author lists in the references
% (see https://tex.stackexchange.com/questions/164017/limiting-the-number-of-authors-in-the-references-with-ieeetran)
\bstctlcite{IEEEexample:BSTcontrol}

%
% paper title
% Titles are generally capitalized except for words such as a, an, and, as,
% at, but, by, for, in, nor, of, on, or, the, to and up, which are usually
% not capitalized unless they are the first or last word of the title.
% Linebreaks \\ can be used within to get better formatting as desired.
% Do not put math or special symbols in the title.
\title{FDR-Corrected Sparse Canonical Correlation Analysis with Applications to Imaging Genomics}
%
%
% author names and IEEE memberships
% note positions of commas and nonbreaking spaces ( ~ ) LaTeX will not break
% a structure at a ~ so this keeps an author's name from being broken across
% two lines.
% use \thanks{} to gain access to the first footnote area
% a separate \thanks must be used for each paragraph as LaTeX2e's \thanks
% was not built to handle multiple paragraphs
%

\author{Alexej~Gossmann,~%
        Pascal~Zille,~%
        Vince~Calhoun,~%
        and~Yu-Ping~Wang%

        \thanks{A. Gossmann is with the Bioinnovation PhD Program, Tulane University, New Orleans, LA 70118 USA.}% <-this % stops a space
% note need leading \protect in front of \\ to get a newline within \thanks as
% \\ is fragile and will error, could use \hfil\break instead.
\thanks{Y.-P. Wang and P. Zille are with the Department of Biomedical Engineering, Tulane University, New Orleans, LA 70118 USA.}% <-this % stops a space
\thanks{V. Calhoun is with The Mind Research Network, University of New Mexico, Albuquerque, NM 87131 USA, and also with the Department of Electrical and Computer Engineering, The University of New Mexico, Albuquerque, NM 87131 USA.}}

\IEEEpubid{\begin{minipage}{\textwidth}\ \\[12pt] \centering
  Copyright \copyright 2018 IEEE\@. Accepted for publication in the IEEE Transactions on Medical Imaging. Personal use is permitted, but republication/redistribution requires IEEE permission.
  See \url{http://www.ieee.org/publications_standards/publications/rights/index.html} for more information.
\end{minipage}}
% Remember, if you use this you must call \IEEEpubidadjcol in the second
% column for its text to clear the IEEEpubid mark.

% use for special paper notices
%\IEEEspecialpapernotice{(Invited Paper)}

% make the title area
\maketitle

% As a general rule, do not put math, special symbols or citations
% in the abstract or keywords.
\begin{abstract}
  \noindent Reducing the number of false discoveries is presently one of the most pressing issues in the life sciences. It is of especially great importance for many applications in neuroimaging and genomics, where datasets are typically high-dimensional, which means that the number of explanatory variables exceeds the sample size. The false discovery rate (FDR) is a criterion that can be employed to address that issue. Thus it has gained great popularity as a tool for testing multiple hypotheses. Canonical correlation analysis (CCA) is a statistical technique that is used to make sense of the cross-correlation of two sets of measurements collected on the same set of samples (e.g., brain imaging and genomic data for the same mental illness patients), and sparse CCA extends the classical method to high-dimensional settings. Here we propose a way of applying the FDR concept to sparse CCA, and a method to control the FDR\@. The proposed FDR correction directly influences the sparsity of the solution, adapting it to the unknown true sparsity level. Theoretical derivation as well as simulation studies show that our procedure indeed keeps the FDR of the canonical vectors below a user-specified target level. We apply the proposed method to an imaging genomics dataset from the Philadelphia Neurodevelopmental Cohort. Our results link the brain connectivity profiles derived from brain activity during an emotion identification task, as measured by functional magnetic resonance imaging (fMRI), to the corresponding subjects' genomic data.
\end{abstract}

% Note that keywords are not normally used for peerreview papers.
\begin{IEEEkeywords}
fMRI analysis, genome, machine learning, probabilistic and statistical methods.
\end{IEEEkeywords}

% For peer review papers, you can put extra information on the cover
% page as needed:
% \ifCLASSOPTIONpeerreview
% \begin{center} \bfseries EDICS Category: 3-BBND \end{center}
% \fi
%
% For peerreview papers, this IEEEtran command inserts a page break and
% creates the second title. It will be ignored for other modes.
\IEEEpeerreviewmaketitle

\section{Introduction}
\label{sec:introduction}

\IEEEPARstart{C}{anonical} correlation analysis (due to Hotelling, \cite{Hotelling1936}), or CCA, is a classical statistical technique, which is used to make sense of the cross-correlation of two sets of measurements collected on the same set of samples. More precisely, given two sets of random variables, CCA identifies linear combinations of each, which have maximum correlation with each other. The coefficients of these linear combinations of features are called canonical vectors. Like many classical statistical techniques, CCA fails in high-dimensional settings, when the number of variables in either of the two cross-correlated datasets exceeds the number of samples.
For application to high-dimensional data, several methods of sparse canonical correlation analysis (sparse CCA, or sCCA) have been proposed, where sparsity is imposed on the canonical vectors (e.g., see \cite{Witten2009, Witten2009extensions, Parkhomenko2009}). Many applications have demonstrated the usefulness of sparse CCA methods.
For example they are commonly used in genomics to analyze datasets consisting of two genomic assays for the same set of subjects or cells (e.g., \cite{Le_Cao2009-nk, Waaijenborg2008-qy} among many others). They have also been successfully employed for the analysis of neuroimaging and imaging genomics datasets (e.g., \cite{Lin2014-he, Du2016-iu}), such as brain imaging and DNA sequence data for the same set of brain disease or mental illness patients.

However, most of the widely used sparse CCA methods determine the level of sparsity of the canonical vectors based on criteria of model fit. Different types of cross-validation procedures have been proposed for sparse CCA (e.g., \cite{Parkhomenko2009, Du2016-iu, Waaijenborg2008-qy}), sometimes incorporating criteria such as the AIC or BIC (e.g., \cite{Wilms2015-ra}), and a permutation based method is proposed in \cite{Witten2009extensions}. In some cases authors simply impose a certain level of sparsity on the solution vectors (e.g., \cite{Le_Cao2009-nk}), which heavily relies on the appropriateness of such prior assumptions. In general, the behavior of the selection procedures for the sparsity parameters in sparse CCA is not well understood, and there is a lack of theoretical guarantees regarding the recovery of an appropriate sparsity level of the sparse CCA solution.

In this work we propose a definition of false discovery rate (FDR) for canonical vectors, which is subsequently used as a statistical criterion to determine an appropriate sparsity level for a sparse CCA solution.
The proposed FDR criterion is a generalization of the conventional FDR \cite{Benjamini1995-qd} to canonical correlation analysis.
With the aim of reliably obtaining canonical vectors with FDR below a user-specified level $q$, we propose an \emph{FDR-corrected sparse CCA} procedure.
Up to a (small) proportion of false discoveries, which our method keeps on average far below a user-specified level $q$, the FDR-corrected sparse CCA method is shown to produce canonical vectors consisting only of features that are truly cross-correlated between the two analyzed datasets.
Roughly, our proposed procedure consists of the following two steps: (1) we use a subsample of the data to fit a (conventional) sparse CCA model; and (2) we use the remainder of the dataset to perform an FDR correction on the result of step (1).
\IEEEpubidadjcol

With a theoretical derivation as well as simulation studies we show that our procedure indeed keeps the FDR of the canonical vectors at or below a user-specified target level.
Additionally, we apply our method to an imaging genomics dataset from the Philadelphia Neurodevelopmental Cohort (PNC).
We use sparse CCA to identify relationships between the subjects' DNA sequence data, and their functional brain connectivity profiles, which are measures derived from neural activity during an emotion identification (EMID) task, whereby the neural activity is measured by functional magnetic resonance imaging (fMRI).
The general goal of the presented application example is to gain understanding on how individual subjects' genetic makeup relates to their EMID-related brain connectivity measures, and vice versa. Our real data results are consistent with those reported elsewhere in the literature, demonstrating the validity of the method.

\section{Canonical Correlation Analysis}

In this section we describe the data structure and the notation that we use throughout the rest of the paper.
This section also introduces the classical CCA method, as well as the standard approach to sparse CCA.

\subsection{Assumed data structure and notation}
\label{sec:data_structure}

Let $\fatx^{(1)}, \fatx^{(2)}, \dots, \fatx^{(n)} \in\R^{p_X}$ be independent $\mathcal{N}(0, \Sigma_X)$-distributed vectors, and let $\faty^{(1)}, \faty^{(2)}, \dots, \faty^{(n)} \in \R^{p_Y}$ be independent $\mathcal{N}(0, \Sigma_Y)$-distributed vectors, where $\Sigma_X\in\R^{p_X\times p_X}$ and  $\Sigma_Y\in\R^{p_Y\times p_Y}$ are symmetric positive definite.

Assume that for all $k\in\left\{ 1,\dots,n \right\}$, the cross-covariance matrix
$$\Cov\left(\fatx^{(k)}, \faty^{(k)}\right) = \Sigma_{XY} \in \R^{p_X\times p_Y}$$
has entries
$\rho^{XY}_{i,j}$ for $i\in\left\{ 1, \dots, p_X \right\}$ and $j\in\left\{ 1,\dots,p_Y \right\}$,
and assume that
$$\Cov\left(\fatx^{(k)}, \faty^{(l)}\right) = 0, \mathrm{\,if\,} k \neq l.$$
Similarly, for $i,j\in\left\{ 1,\dots, p_X \right\}$ we denote the entries of $\Sigma_X$ by $\rho^X_{i,j}$, and the entries of $\Sigma_Y$ by $\rho^Y_{i,j}$ for $i,j\in\left\{ 1,\dots,p_Y \right\}$.

We define random matrices $X\in\R^{n\times p_X}$ and $Y\in\R^{n\times p_Y}$ by
\begin{equation*}
  X :=
  \begin{bmatrix}
    \left(\fatx^{(1)}\right)^T \\
    \left(\fatx^{(2)}\right)^T \\
    \vdots \\
    \left(\fatx^{(n)}\right)^T
  \end{bmatrix},
  \quad
  Y :=
  \begin{bmatrix}
    \left(\faty^{(1)}\right)^T \\
    \left(\faty^{(2)}\right)^T \\
    \vdots \\
    \left(\faty^{(n)}\right)^T
  \end{bmatrix}.
\end{equation*}

Thus, we can think of $X$ and $Y$ as two datasets containing respectively $p_X$ and $p_Y$ features collected for $n$ independent samples, where the features in the two datasets are cross-correlated with cross-covariance matrix $\Sigma_{XY}$. In that sense, the matrix $\begin{bmatrix} X & Y \end{bmatrix}$ is the combined dataset with covariance structure given by the matrix

\begin{equation}
  \Sigma :=
  \begin{bmatrix}
    \Sigma_X & \Sigma_{XY} \\
    \Sigma_{XY}^T & \Sigma_Y
  \end{bmatrix}.
  \label{eq:definition_of_Sigma}
\end{equation}

Please note that, even though all presented analytical derivations pre-suppose Gaussian data, we investigate departures from this distributional assumption in the simulation studies of Section~\ref{sec:hybrid_simulation}.

\subsection{Classical CCA}
\label{sec:classical_CCA}

Standard formulation of CCA \cite{Hotelling1936} seeks for vectors $\fatu\in\R^{p_X}$ and $\fatv\in\R^{p_Y}$ to maximize the sample correlation between $X\fatu$ and $Y\fatv$. Thus, the CCA optimization problem is given by
\begin{equation*}
  \argmax_{\fatu\in\R^{p_X}, \fatv\in\R^{p_Y}} \widehat{\Cov}(X\fatu, Y\fatv) = \argmax_{\fatu\in\R^{p_X}, \fatv\in\R^{p_Y}} \frac{1}{n} \fatu^T X^T Y \fatv,
\end{equation*}
subject to
\begin{align*}
  \widehat{\Var}(X\fatu) &= \frac{1}{n-1} \fatu^T X^T X \fatu = 1,\\
  \widehat{\Var}(Y\fatv) &= 1.
\end{align*}
The solution to this optimization problem, $\left(\widehat{\fatu}, \widehat{\fatv}\right)$, is called the first pair of canonical vectors. The linear combinations of features, $X\widehat{\fatu}$ and $Y\widehat{\fatv}$, are called the first pair of canonical variates. Subsequent pairs of canonical variates are restricted to be uncorrelated with the previous ones. For more detail we refer to \cite{Johnson2002-jx}.

\subsection{Sparse CCA}

The conventional CCA becomes degenerate if $n \leq \max\left\{ p_X, p_Y \right\}$, which is often the case in neuroimaging and other biomedical applications.
Sparse CCA (e.g., \cite{Parkhomenko2009, Witten2009, Witten2009extensions}) extends CCA to high-dimensional data by imposing a sparsity assumption on the CCA solution. Sparsity is achieved by utilizing penalty terms, such as the $\ell_1$-norm, on the canonical vectors. This results in a unique solution even when $p_X, p_Y \gg n$.

Most commonly the $\ell_1$-penalty is used in order to enforce the sparsity assumption on the canonical vectors. The resulting penalized CCA problem, which is introduced in \cite{Witten2009}, is given by
\begin{equation}
  \begin{gathered}
    \argmax_{\fatu\in\R^{p_X}, \fatv\in\R^{p_Y}} \frac{1}{n} \fatu^T X^T Y \fatv,\\
    \text{subject to}\\
    \| \fatu \|_2^2 \leq 1, \| \fatv \|_2^2 \leq 1, \| \fatu \|_1 \leq c_1, \| \fatv \|_1 \leq c_2.
    \label{eq:ordinary_SCCA}
  \end{gathered}
\end{equation}
However, the determination of the sparsity level, or the model tuning parameters $c_1$ and $c_2$, remains a challenging problem, and a topic of ongoing research, as briefly discussed in Section~\ref{sec:introduction}.

Higher-order pairs of canonical vectors can be found by applying sparse CCA to a residual matrix, obtained from $X^T Y$ and the previously found canonical variates (see \cite{Witten2009}).

\section{Defining false discovery rate (FDR) for sparse CCA}

It is not clear how the well-known definition of FDR (due to \cite{Benjamini1995-qd}), which is widely used in multiple hypothesis testing, can be carried over to sparse CCA\@.
In this work, we first propose an adaptation of the FDR concept to the context of sparse CCA, and then introduce a method which ensures that the FDR is kept below a given threshold.
Since we have a pair of canonical vectors $\fatu$ and $\fatv$, we consider the FDR in $\fatu$ and in $\fatv$ separately.
In most applications of sparse CCA the matrices $X$ and $Y$ will typically correspond to two totally different types of features (such as genomics vs.\ brain imaging), and mixing them within one common set of variables with a combined FDR does not seem appropriate.

For the sake of clarity, we present in the following the FDR derivation for the canonical vector $\fatu$ only. An identical derivation can be used for the canonical vector $\fatv$.

The population-level formulation of CCA \cite{Hotelling1936} seeks to maximize $\Cov(\fatx^T\fatu, \faty^T\fatv)$ under the constraints that $\Var(\fatx^T\fatu) = 1 = \Var(\faty^T\fatv)$, where $\fatx \sim \mathcal{N}(0, \Sigma_X)$ and $\faty \sim \mathcal{N}(0, \Sigma_Y)$ are random vectors distributed as the rows of $X$ and $Y$ respectively (cf., the sample formulation of CCA in Section~\ref{sec:classical_CCA}).
Let $\widehat{\fatu}$ be an estimate of the canonical vector $\fatu$.
For each ${i \in\left\{ 1,2,\dots,p_X \right\}}$, we say that the estimated coefficient $\hat{u}_i$ represents a \emph{false discovery} of the $i$th feature of $X$, if $\hat{u}_i \neq 0$ but $u_i$ does not affect the value of $\Cov(\fatx^T\fatu, \faty^T\fatv)$.
Notice that the coefficient $u_i$ contributes to $\Cov(\fatx^T\fatu, \faty^T\fatv)$ only through the multiplicative term
\begin{equation*}
  u_i \cdot \Cov(x_i, \faty^T \fatv) = u_i \cdot \sum_{j=1}^{p_Y} v_j \rho_{i,j}^{XY}.
\end{equation*}
It follows that the estimated coefficient $\hat{u}_i$ represents a false discovery of the $i$th feature of $X$, if and only if $\hat{u}_i \neq 0$ but $\sum_{j=1}^{p_Y} v_j \rho_{i,j}^{XY} = 0$ (or equivalently, if $\hat{u}_i \neq 0$ but the $i$th element of $\fatx$ is uncorrelated with the canonical variable $\faty^T \fatv$).

Thus, given the canonical vector $\fatv\in\R^{p_Y}$, the problem of identifying which non-zero entries of an estimate $\widehat{\fatu}$ represent false discoveries can be recast as testing the null hypothesis
\begin{equation}
  \mathrm{H}_i : \sum_{j=1}^{p_Y} v_j \rho_{i,j}^{XY} = 0,
  \label{eq:hypothesis_test_FDR_definition_section}
\end{equation}
for each $i \in\left\{ 1,2,\dots,p_X \right\}$ with $\hat{u}_i \neq 0$.

Let $R_{\widehat{\fatu}}$ be the number of non-zero entries in a sparse CCA estimate $\widehat{\fatu}$.
Let $V_{\widehat{\fatu}}$ denote the number of false rejections, that is, the number of indices $i \in\left\{ 1,2,\dots,p_X \right\}$, such that $\hat{u}_i \neq 0$ but $\sum_{j=1}^{p_Y} v_j \rho_{i,j}^{XY} = 0$.

Analogously, define $R_{\widehat{\fatv}}$ and $V_{\widehat{\fatv}}$ for the canonical vector $\widehat{\fatv}$.

\begin{definition}[False discovery rate]
  Define the false discovery rate in $\fatu$ as
  \begin{equation}
    \FDR(\widehat{\fatu}) := \E\left( \frac{V_{\widehat{\fatu}}}{\max\left\{ R_{\widehat{\fatu}}, 1 \right\}} \right),
    \label{eq:definition_FDRu}
  \end{equation}
  and analogously define,
  \begin{equation}
    \FDR(\widehat{\fatv}) := \E\left( \frac{V_{\widehat{\fatv}}}{\max\left\{ R_{\widehat{\fatv}}, 1 \right\}} \right).
    \label{eq:definition_FDRv}
  \end{equation}
  \label{def:FDR}
\end{definition}

In the next section we rely on this definition of FDR as an optimality criterion and as a guide in the identification of an appropriate sparsity level for the canonical vectors.

\section{FDR-corrected sparse CCA}
\label{sec:FDRcorrectedSCCA_intro}

Because the random vectors
$\left(
  \fatx^{(1)},
  \faty^{(1)}
\right)$,
$\left(
  \fatx^{(2)},
  \faty^{(2)}
\right)$, $\dots$,
$\left(
  \fatx^{(n)},
  \faty^{(n)}
\right)$, which form the rows of $X$ and $Y$, are independent and identically distributed,
we have that
\begin{equation*}
  \E(X^T Y \fatv) = n \cdot \Sigma_{XY} \cdot \fatv
\end{equation*}
for any $\fatv\in\R^{p_Y}$.
It follows that the null hypothesis $\mathrm{H}_i$ given by Equation (\ref{eq:hypothesis_test_FDR_definition_section}) is equivalent to
\begin{equation*}
  \mathrm{H}_i : \E\left(X^TY\fatv\right)_i = 0.
\end{equation*}

Motivated by that observation, the main idea of our FDR-correcting approach is to first obtain initial estimates $\widehat{\fatu}^{(0)}$ and $\widehat{\fatv}^{(0)}$ of the canonical vectors, and then to test null hypotheses of the form
\begin{equation}
  \mathrm{H}^{(u)}_i : \E \left(X^T Y \widehat{\fatv}^{(0)}\right)_i = 0,
  \label{eq:null_hypothesis_u}
\end{equation}
for each $i \in\left\{ 1,2,\dots,p_X \right\}$ with $\hat{u}_i \neq 0$, and
\begin{equation}
  \mathrm{H}^{(v)}_j : \E \left(Y^T X \widehat{\fatu}^{(0)}\right)_j = 0,
  \label{eq:null_hypothesis_v}
\end{equation}
for each $j \in\left\{ 1,2,\dots,p_Y \right\}$ with $\hat{v}_j \neq 0$, in order to determine which entries of $\fatu$ and $\fatv$ are truly non-zero.

\subsection{Asymptotic distribution}
\label{sec:asymptotic_distribution}

In order to be able to make probabilistic statements about the estimators of the sparse canonical vectors, we need to know the respective distributions of $X^T Y \fatv$ and $Y^T X \fatu$. In the following we will focus on $X^T Y \fatv$, but the obtained results clearly carry over to $Y^T X \fatu$, by swapping $X$ and $Y$ as well as $\fatu$ and $\fatv$.

\begin{theorem}[Asymptotic Normality]
  Let the random matrices $X$ and $Y$ be defined as above. For any vector $\fatv\in\R^{p_Y}$, it holds that
  \begin{equation}
    \sqrt{n} \left(\frac{1}{n} X^T Y \fatv - \fatmu \right) \overset{\mathcal{D}}{\longrightarrow} \mathcal{N}(0, \Omega),
    \label{eq:XtYv_is_asymptotically_normal}
  \end{equation}
  where $\fatmu\in\R^{p_X}$ has entries
  \begin{equation}
    \mu_i = \sum_{j = 1}^{p_Y} v_j \rho^{XY}_{i,j},
    \label{eq:XtYv_mean}
  \end{equation}
  and where $\Omega\in\R^{p_X \times p_X}$ has entries
  \begin{equation}
    \omega_{i,j} = \left( \sum_{k=1}^{p_Y} v_k \rho^{XY}_{i,k} \right) \left( \sum_{k=1}^{p_Y} v_k \rho^{XY}_{j,k} \right)
    + \rho^{X}_{i,j} \fatv^T \Sigma_Y \fatv.
    \label{eq:XtYv_covariance}
  \end{equation}
  \label{thm:XtYv_is_asymptotically_normal}
\end{theorem}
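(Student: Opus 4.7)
The plan is to recognize $\frac{1}{n} X^T Y \fatv$ as an i.i.d.\ sample mean and apply the multivariate central limit theorem. Writing the $i$th coordinate as
\begin{equation*}
  \bigl(X^T Y \fatv\bigr)_i = \sum_{k=1}^{n} X^{(k)}_i \bigl(\fatv^T \faty^{(k)}\bigr),
\end{equation*}
I would introduce the random vectors $\fatz^{(k)} \in \R^{p_X}$ with entries $z^{(k)}_i := X^{(k)}_i \bigl(\fatv^T \faty^{(k)}\bigr)$. Because the pairs $(\fatx^{(k)}, \faty^{(k)})$ are jointly Gaussian with the same covariance structure for every $k$, and because different indices $k$ are independent (using the assumption $\Cov(\fatx^{(k)}, \faty^{(l)}) = 0$ for $k\neq l$ together with joint Gaussianity), the vectors $\fatz^{(1)}, \ldots, \fatz^{(n)}$ are i.i.d. Thus $\frac{1}{n} X^T Y \fatv = \frac{1}{n}\sum_{k=1}^{n} \fatz^{(k)}$, and it suffices to compute the mean and covariance of $\fatz^{(1)}$ and appeal to the multivariate CLT.

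Computing the mean is immediate: $\E z^{(1)}_i = \sum_{j=1}^{p_Y} v_j \E\bigl[X^{(1)}_i Y^{(1)}_j\bigr] = \sum_{j=1}^{p_Y} v_j \rho^{XY}_{i,j} = \mu_i$, which matches \eqref{eq:XtYv_mean}. The covariance entry $\omega_{i,j} = \E\bigl[z^{(1)}_i z^{(1)}_j\bigr] - \mu_i \mu_j$ requires a fourth-moment calculation. Since $X^{(1)}_i$, $X^{(1)}_j$ and $\fatv^T \faty^{(1)}$ are zero-mean and jointly Gaussian, I would invoke Isserlis' (Wick's) theorem on the product $X^{(1)}_i X^{(1)}_j (\fatv^T \faty^{(1)})^2$. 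Setting $A = X^{(1)}_i$, $B = X^{(1)}_j$, $C = D = \fatv^T \faty^{(1)}$, the three-pair identity gives
\begin{equation*}
  \E\bigl[ABCD\bigr] = \E[AB]\,\E[CD] + 2\,\E[AC]\,\E[BD],
\end{equation*}
so
\begin{equation*}
  \E\bigl[z^{(1)}_i z^{(1)}_j\bigr] = \rho^X_{i,j}\,\fatv^T \Sigma_Y \fatv + 2\,\mu_i \mu_j.
\end{equation*}
Subtracting $\mu_i \mu_j$ yields exactly \eqref{eq:XtYv_covariance}.

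With the mean $\fatmu$ and covariance $\Omega$ in hand, and since all coordinates of $\fatz^{(1)}$ are finite products of jointly Gaussian variables (so finite second moments are automatic), the multivariate Lindeberg--L\'evy CLT delivers
\begin{equation*}
  \sqrt{n}\left(\frac{1}{n}\sum_{k=1}^n \fatz^{(k)} - \fatmu\right) \overset{\mathcal{D}}{\longrightarrow} \mathcal{N}(0, \Omega),
\end{equation*}
which is precisely \eqref{eq:XtYv_is_asymptotically_normal}. The main obstacle is the covariance computation, and specifically making sure I correctly apply Isserlis' identity to the coincident factors $C=D$ so that the cross-covariances $\mu_i, \mu_j$ enter with the coefficient $2$; a naive application can easily double count or miss a term. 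Everything else is a routine application of the CLT.
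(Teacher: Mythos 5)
Your proof is correct, and it follows the paper's overall skeleton: both write $\frac{1}{n}X^TY\fatv$ as the sample mean of the i.i.d.\ vectors $\fatx^{(k)}\bigl(\faty^{(k)}\bigr)^T\fatv$, invoke the multivariate Lindeberg--L\'evy CLT, and obtain the mean $\fatmu$ by linearity. Where you genuinely diverge is in the computation of $\Omega$. The paper embeds the quantity into the Wishart framework: with $A=\begin{bmatrix} X & Y\end{bmatrix}$ it writes each coordinate $w_i$ as $\tr\bigl(\tfrac{1}{2n}M_i\,A^TA\bigr)$ for a suitable symmetric matrix $M_i$, and then applies a covariance-of-traces identity for Wishart matrices, $\Cov\bigl(\tr(MS),\tr(NS)\bigr)=2n\,\tr(M\Sigma N\Sigma)$ (its Lemma~\ref{lemma:covariance_of_traces}, derived from Eaton's Proposition~8.3), followed by block-matrix bookkeeping. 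You instead compute the Gaussian fourth moment $\E\bigl[X^{(1)}_iX^{(1)}_j(\fatv^T\faty^{(1)})^2\bigr]$ directly via Isserlis' theorem, and you handle the coincident factors correctly: the two cross pairings coincide and contribute $2\mu_i\mu_j$, of which one copy is cancelled when subtracting $\E[z_i]\E[z_j]$, leaving exactly $\mu_i\mu_j+\rho^X_{i,j}\,\fatv^T\Sigma_Y\fatv$ as in \eqref{eq:XtYv_covariance}. Your route is more elementary and self-contained (no Wishart machinery, no vectorization/matricization manipulations), while the paper's route packages the same fourth-moment combinatorics into a reusable trace identity that generalizes more readily to covariances of arbitrary bilinear forms in $A^TA$. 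One small point of care, which you already flag: independence of the $\fatz^{(k)}$ across $k$ rests on joint Gaussianity of the whole collection together with the vanishing cross-covariances for $k\neq l$; the paper asserts this i.i.d.\ structure without further comment, so you are, if anything, slightly more explicit than the original.
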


\begin{proof}
  The proof is given in Appendix~\ref{sec:XtYv_is_asymptotically_normal_proof}.
\end{proof}

\subsection{The FDR-corrected sparse CCA procedure}
\label{sec:FDRcorrectedCCA}

We propose an FDR correction procedure for sparse CCA based on the general idea outlined at the beginning of Section~\ref{sec:FDRcorrectedSCCA_intro}.
However several important aspects need to be taken into consideration when designing such a procedure.
In order to apply Theorem~\ref{thm:XtYv_is_asymptotically_normal} for purposes of testing hypotheses of the form (\ref{eq:null_hypothesis_u}) and (\ref{eq:null_hypothesis_v}), the preliminary estimates $\widehat{\fatu}^{(0)}$ and $\widehat{\fatv}^{(0)}$ as well as estimates of their associated variances (as given by Equation (\ref{eq:XtYv_covariance})) need to be obtained based on data that are independent from the dataset used for the hypotheses tests. Furthermore, the variances need to be estimated independently of $\widehat{\fatu}^{(0)}$ and $\widehat{\fatv}^{(0)}$, because otherwise the variance estimates corresponding to the non-zero entries of $\widehat{\fatu}^{(0)}$ and $\widehat{\fatv}^{(0)}$ will be inflated.

\begin{procedure}[FDR-corrected sparse CCA]
  \label{proc:FDRcorrectedSCCA}
  We propose an FDR-corrected sparse CCA procedure consisting of the following steps:

  \begin{enumerate}
    \item \emph{Data subsetting.} We divide each of the data matrices $X$ and $Y$ into three subsets of sizes $n_0$, $n_1$, and $n_2$, i.e.,
      \begin{equation*}
        X = \begin{bmatrix} X^{(0)} \\
                            X^{(1)} \\
                            X^{(2)}
            \end{bmatrix}
        \,\text{and}\,
        Y = \begin{bmatrix} Y^{(0)} \\
                            Y^{(1)} \\
                            Y^{(2)}
            \end{bmatrix},
        \,\text{where}\,
      \end{equation*}
      ${X^{(0)} \in \R^{n_0 \times p_X}}$, ${X^{(1)} \in \R^{n_1 \times p_X}}$, ${X^{(2)} \in \R^{n_2 \times p_X}}$, ${Y^{(0)} \in \R^{n_0 \times p_Y}}$, ${Y^{(1)} \in \R^{n_1 \times p_Y}}$, ${Y^{(2)} \in \R^{n_2 \times p_Y}}$, $n_0 + n_1 + n_2 = n$.
      In practice the split needs to be random, and the three subsets have to follow the same distribution.

    \item \emph{Preliminary estimates of the canonical vectors obtained on the first subset.} We obtain the preliminary estimates $\widehat{\fatu}^{(0)}$and $\widehat{\fatv}^{(0)}$ by applying the $\ell_1$-penalized CCA, as given by Equation (\ref{eq:ordinary_SCCA}), with a liberal choice for the tuning parameters $c_1$ and $c_2$, to $X^{(0)}$ and $Y^{(0)}$. At this step we aim to capture all truly non-zero entries of the true canonical vectors $\fatu$ and $\fatv$ within the support of the preliminary estimates $\widehat{\fatu}^{(0)}$ and $\widehat{\fatv}^{(0)}$. However we require that neither $\widehat{\fatu}^{(0)}$ nor $\widehat{\fatv}^{(0)}$ has more than $n_2$ non-zero entries, which is mainly needed for Step 5 below. The exact strategy of choosing the penalty parameters in this step is presented for each considered application of the procedure, at the time when the respective application is discussed in Sections~\ref{sec:gaussian_data},~\ref{sec:hybrid_simulation}, and~\ref{sec:upenn_data}.

      \item \emph{Covariance structure estimation on the second subset.} We use the matrices $X^{(1)}$ and $Y^{(1)}$ to obtain $\widehat{\Sigma}^{(1)}$, the maximum likelihood estimate of the covariance matrix of $\begin{bmatrix} X & Y \end{bmatrix}$.

        \item \emph{Statement of the hypotheses to be tested.} Let $I_u^{(0)}$ and $I_v^{(0)}$ be the sets of indices corresponding to the non-zero entries of $\widehat{\fatu}^{(0)}$ and $\widehat{\fatv}^{(0)}$, i.e.,
          \begin{align*}
            I_u^{(0)} &:= \{i : \hat{u}^{(0)}_i \neq 0\},\\
            I_v^{(0)} &:= \{j : \hat{v}^{(0)}_j \neq 0\}.
          \end{align*}
          For all $i \in I_u^{(0)}$ and all $j \in I_v^{(0)}$ we intend to test whether $u_i$ and $v_j$ are truly non-zero.
          We slightly adjust the hypotheses tests of form (\ref{eq:null_hypothesis_u}) and (\ref{eq:null_hypothesis_v}) to the present framework, where for each $i \in I_u^{(0)}$ and each $j \in I_v^{(0)}$ we test,
          \begin{align}
            \mathrm{H}^{(u)}_i &:& \E \left[ \left( X^{(2)} \right)^T Y^{(2)} \widehat{\fatv}^{(0)} \middle| \Sigma = \widehat{\Sigma}^{(1)} \right]_i = 0,
            \label{eq:null_hypothesis_u_mod} \\
            \mathrm{H}^{(v)}_j &:& \E \left[ \left( Y^{(2)} \right)^T X^{(2)} \widehat{\fatu}^{(0)} \middle| \Sigma = \widehat{\Sigma}^{(1)} \right]_j = 0.
            \label{eq:null_hypothesis_v_mod}
          \end{align}
          Notice that the tests (\ref{eq:null_hypothesis_u_mod}) and (\ref{eq:null_hypothesis_v_mod}) are heuristics in the sense that they approximate the tests (\ref{eq:null_hypothesis_u}) and (\ref{eq:null_hypothesis_v}).

        \item \emph{P-value calculation using the third set.} Since $\hat{u}^{(0)}_i = 0$ and  $\hat{v}^{(0)}_j = 0$ for all $i \notin I_u^{(0)}$ and $j \notin I_v^{(0)}$, we can remove the $i$th column of $X^{(2)}$ and the $j$th column of $Y^{(2)}$ for every $i \notin I_u^{(0)}$ and every $j \notin I_v^{(0)}$, without affecting any of the products in Equations (\ref{eq:null_hypothesis_u_mod}) and (\ref{eq:null_hypothesis_v_mod}). Thus we can replace the matrices $X^{(2)}$ and $Y^{(2)}$ by their versions containing only the columns indexed by $I_u^{(0)}$ and $I_v^{(0)}$ respectively, which amounts to a dimensional reduction. Due to Step 2 the cardinality of each of these sets, $\lvert I_u^{(0)}\rvert$ and $\lvert I_v^{(0)} \rvert$, should be sufficiently smaller than $n_2$. Therefore we are now dealing with a low-dimensional problem, so that we are confident in applying Theorem~\ref{thm:XtYv_is_asymptotically_normal}.
          Now, denoting the $i$th entry of $\left( X^{(2)} \right)^T Y^{(2)} \widehat{\fatv}^{(0)}$ by
          $$\xi_i := \left[\left( X^{(2)} \right)^T Y^{(2)} \widehat{\fatv}^{(0)}\right]_i,$$
          \emph{under the null hypothesis} Theorem~\ref{thm:XtYv_is_asymptotically_normal} implies the approximation
          \begin{equation}
            \left(\frac{1}{\sqrt{n}} \xi_i \middle| \Sigma = \widehat{\Sigma}^{(1)} \right) \sim \mathcal{N}\left( 0, \widehat{\omega}_{i,i} \right),
            \label{eq:test_statistic}
          \end{equation}
          where $\widehat{\omega}_{i,i}$ is defined according to Equation (\ref{eq:XtYv_covariance}) with $\widehat{\mathbf{v}}^{(0)}$ and $\widehat{\Sigma}^{(1)}$ from Steps 2 and 3 substituted in place of $\mathbf{v}$ and $\Sigma$.
          Likewise, from Theorem~\ref{thm:XtYv_is_asymptotically_normal} we obtain an asymptotic distribution for $\left( \frac{1}{\sqrt{n}} \left( Y^{(2)} \right)^T X^{(2)} \widehat{\fatu}^{(0)} \middle| \Sigma = \widehat{\Sigma}^{(1)} \right)_i$ under the null hypothesis.
          We calculate p-values for the hypotheses tests (\ref{eq:null_hypothesis_u_mod}) and (\ref{eq:null_hypothesis_v_mod}) based on these asymptotic distributions.
        \item \emph{FDR correction.} After choosing a desired FDR level $q \in (0, 1)$, in order to control the FDR at the level $q$, we apply the Benjamini-Hochberg procedure to the p-values obtained in Step 5. The procedure is applied twice -- separately for the two sets of hypotheses $\left\{ H^{(u)}_i \middle| i \in I_u^{(0)}\right\}$ and $\left\{ H^{(v)}_j \middle| j \in I_v^{(0)}\right\}$.\footnote{Notice that the FDR correction can just as well be be applied with two different levels $q_u$ and $q_v$ for the two sets of hypotheses, or it can also be applied just once to the combined set of hypotheses $H^{(u)}_i$ and $H^{(v)}_j$ for $i \in I_u^{(0)}$ and $j \in I_v^{(0)}$, in order to control the FDR of the ``full model'' (controlling the FDR for each of $\fatu$ and $\fatv$ separately at level $q$ in general does not imply FDR control of the ``full'' model at the same level $q$).} This results in a new, FDR-corrected, solution $\widehat{\fatu}$ and $\widehat{\fatv}$, which is given up to a normalization constant by
      \begin{align*}
        \hat{u}_i &:= \begin{cases}
          \left(X^T Y \widehat{\fatv}^{(0)}\right)_i, &\quad\mathrm{for\,any\,rejected\,}H_i^{(u)},\\
          0, &\quad\mathrm{otherwise}.
        \end{cases}\\
        \hat{v}_j &:= \begin{cases}
          \left(Y^T X \widehat{\fatu}^{(0)}\right)_j, &\quad\mathrm{for\,any\,rejected\,}H_j^{(v)},\\
          0, &\quad\mathrm{otherwise}.
        \end{cases}
      \end{align*}
  \end{enumerate}
\end{procedure}

Of course every step in Procedure~\ref{proc:FDRcorrectedSCCA} would benefit from a larger size of the utilized subsample, and there is a trade-off between $n_0$, $n_1$, and $n_2$ to be made.
However, for simplicity of exposition, in what follows we always use subsamples of equal size with $n_0 = \lfloor{n/3}\rfloor$.

As evidenced by the derivation of the method, given that the true solution is sparse, the FDR correction step adapts the sparsity of the estimator to the unknown sparsity of the true solution.

\section{Simulation studies}

We performed a number of simulation studies, in order to evaluate the performance of Procedure~\ref{proc:FDRcorrectedSCCA}, the proposed FDR-corrected sparse CCA method, in both idealized and more realistic scenarios. The procedure was derived based on the assumption that $X$ and $Y$ have Gaussian entries. Here, we first present simulation results under such Gaussian scenarios, in order to verify that the proposed procedure indeed controls the FDR under the assumptions that its derivation relies on. Then we show simulation studies evaluating the performance of the method on non-Gaussian data, which are generated based on real single-nucleotide polymorphism (SNP) data, and closely resemble the data in the real imaging genomics application presented in Section~\ref{sec:upenn_data}.
All simulation studies, which do not involve human subject data, can be reproduced with the code available at \url{https://github.com/agisga/FDRcorrectedSCCA}.

\subsection{Simulation study with Gaussian data}
\label{sec:gaussian_data}

Although we consider both, low-dimensional ($n < p_X, p_Y$) and high-dimensional ($n > p_X, p_Y$) scenarios, in this section we present only the simulations with high-dimensional data due to the higher relevance of that case to the intended biomedical applications. Due to given space constraints the simulation studies for the low-dimensional case are found in the Supplementary Materials (unsurprisingly the performance of our method improves as the dimensionality of the problem decreases).
In all simulations of this section we use $n = 600$, and $p_X  = p_Y = 1500$.
We denote by $s_X$ the number of variables in $X$ that are truly cross-correlated (i.e., have been generated with a non-zero correlation) with some variables in $Y$. Likewise, we denote by $s_Y$ the number of variables in $Y$ that are truly cross-correlated with some variables in $X$. For the simulations in this section we consider all combinations between $s_X, s_Y \in \left\{1, 20, 40, 60, 80, 100, 120\right\}$.
The matrices $\Sigma^X$, $\Sigma^Y$, and $\Sigma^{XY}$ are block-wise constant, and are populated with the following entries.
\begin{equation}
  \label{eq:Sigma_constant}
  \begin{aligned}
    \rho^X_{i,j} &= \begin{cases}
      1, \quad&\mathrm{if}\quad i = j,\\
      0.5, \quad&\mathrm{if}\quad i \neq j, i \leq s_X, j\leq s_X,\\
      0.1, \quad&\mathrm{otherwise}.
    \end{cases} \\
    \rho^Y_{i,j} &= \begin{cases}
      1, \quad&\mathrm{if}\quad i = j,\\
      0.5, \quad&\mathrm{if}\quad i \neq j, i \leq s_Y, j\leq s_Y,\\
      0.1, \quad&\mathrm{otherwise}.
    \end{cases} \\
    \rho^{XY}_{i,j} &= \begin{cases}
      0.4, \quad&\mathrm{if}\quad i \leq s_X, j\leq s_Y,\\
      0, \quad&\mathrm{otherwise}.
    \end{cases}
  \end{aligned}
\end{equation}
In particular, the cross-correlation between features of $X$ and $Y$ is of only mediocre strength, having a magnitude of $0.4$.
Furthermore, the choice of parameters given in Equation (\ref{eq:Sigma_constant}) ensures that the covariance matrix
\begin{equation}
  \Sigma =
  \begin{bmatrix}
    \Sigma_X & \Sigma_{XY} \\
    \Sigma_{XY}^T & \Sigma_Y
  \end{bmatrix}
  \label{eq:Sigma}
\end{equation}
is positive definite.
A visualization of the block-wise constant structure of such a covariance matrix $\Sigma$ is given in Figure~\ref{fig:Sigma_constant}.
%, where the $x$- and $y$-axes represent the indices of the $(p_X + p_Y)$ columns of the matrix $\begin{bmatrix} X & Y \end{bmatrix}$, and the magnitude of the covariance of each pair of variables is mapped to the color scale with correlation of 0 represented by white and correlation of 1 represented by black.

\begin{figure}[ht]
  \centering
  \includegraphics[width=0.30\textwidth]{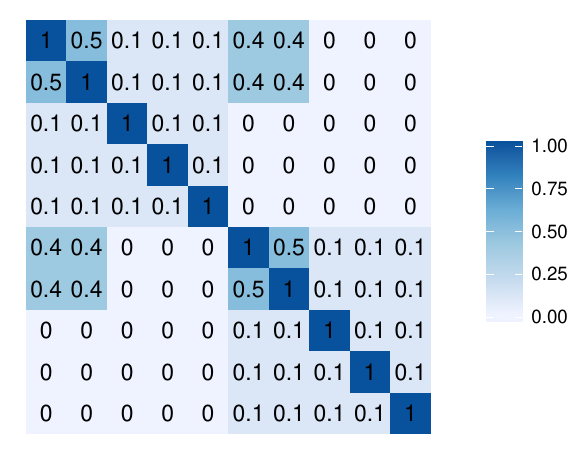}
  \caption{Shown is the block-wise constant structure of a covariance matrix $\Sigma$ as defined by equations (\ref{eq:Sigma_constant}) and (\ref{eq:Sigma}) with $p_X = p_Y = 5$ and $s_X = s_Y = 2$ (for clarity of representation the matrix dimensions are much smaller here than in the simulation study). The $x$- and $y$-axes represent the indices of the $(p_X + p_Y)$ columns in the matrix $\left[X\;\;Y\right]$
    and the magnitude of the covariance of each pair of variables is mapped to the color scale.}
  \label{fig:Sigma_constant}
\end{figure}

We perform Procedure~\ref{proc:FDRcorrectedSCCA} with three choices for the target FDR level $q = 0.05, 0.1, 0.2$. For Step 2 of the procedure (estimation of preliminary solution $\widehat{\fatu}^{(0)}, \widehat{\fatv}^{(0)}$) we use the $\ell_1$-penalized CCA of Equation (\ref{eq:ordinary_SCCA}), where we tune the penalty parameters such that $\widehat{\fatu}^{(0)}$ and $\widehat{\fatv}^{(0)}$ each contain approximately $\frac{n_2}{2} = \frac{n/3}{2} = 100$ non-zero entries. Enforcing the sparsity of $\widehat{\fatu}^{(0)}$ and $\widehat{\fatv}^{(0)}$ to be restricted in such a way results in a low-dimensional problem in Step 5 of the procedure (the p-value calculation) where we apply the asymptotic distribution of Theorem~\ref{thm:XtYv_is_asymptotically_normal}. However note that by following this approach we implicitly assume that $s_X, s_Y < \frac{n_2}{2} = 100$. This is true for most but not for all of the considered simulation settings, and we investigate the effects on the corresponding results below.

We compare the proposed method to the widely-used $\ell_1$-penalized CCA in conjunction with multiple strategies for the $\ell_1$-penalty parameter selection. An excellent implementation of the $\ell_1$-penalized CCA, as given in Equation~\ref{eq:ordinary_SCCA}, is available in the R package \texttt{PMA} \cite{Witten2009}, which uses the parametrization $c_1 = \lambda_1 \cdot \sqrt{p_X}$ and $c_2 = \lambda_2 \cdot \sqrt{p_Y}$ with $\lambda_1, \lambda_2 \in(0, 1)$ to be selected manually or by an automated procedure. When selecting an optimal penalty level in sparse CCA it is common practice to set $\lambda_1 = \lambda_2 =: \lambda$ (e.g., the default setting of the automated selection procedure \texttt{CCA.permute} in the R package \texttt{PMA} \cite{Witten2009}), because this substantially decreases the computational burden by reducing the 2D parameter search to a 1D search. For purposes of comparison to our method, we consider the following strategies for the selection of $\lambda$ in $\ell_1$-penalized CCA.
\begin{itemize}
  \item Fixed parameter values $\lambda = 0.01$ (very sparse) and ${\lambda = 0.3}$ (moderately sparse).
  \item Automated selection of an optimal $\lambda$ by the permutation based procedure of \cite{Witten2009}, where, as per the default settings of the R package \texttt{PMA} \cite{Witten2009}, $\lambda$ is selected from among 10 equispaced values between 0.1 and 0.7.
  \item A 5-fold cross-validation (CV) approach which selects the $\lambda$ value that maximizes the canonical correlation $\Cor(X\widehat{\fatu}, Y\widehat{\fatv})$ on the test set. As in the permutation based approach ten equispaced values between 0.1 and 0.7 are considered for $\lambda$.
\end{itemize}
Notice that these methods are performed on the whole data, while Procedure~\ref{proc:FDRcorrectedSCCA} splits the data into three subsets, effectively performing the CCA step on only a third of the data.
Nonetheless Procedure~\ref{proc:FDRcorrectedSCCA} outperforms these competing methods in many respects as we will see below.

The simulation is performed 500 times for every combination of the considered methods and parameter values. In every setting we estimate $\FDR(\widehat{\fatu})$ and $\FDR(\widehat{\fatv})$ as the mean values of the false discovery proportions (FDP) obtained from the respective 500 simulation runs, whereby $\mathrm{FDP}(\widehat{\fatv}) := \left( V_{\widehat{\fatv}} / \max\left\{ R_{\widehat{\fatv}}, 1 \right\} \right)$ (and likewise $\mathrm{FDP}(\widehat{\fatu})$, see Equations (\ref{eq:definition_FDRu}) and (\ref{eq:definition_FDRv})).
We also record some other statistics from the simulations, such as the estimated true positive rate (TPR, also known as sensitivity), which is the expected proportion of correctly identified features among all the truly significant features.
Like FDR, TPR is computed separately for $\widehat{\fatu}$ and $\widehat{\fatv}$, and is denoted by $\TPR\left(\widehat{\fatu}\right)$ and $\TPR\left(\widehat{\fatv}\right)$ respectively. The true value of $\TPR\left(\widehat{\fatv}\right)$ is estimated as an average of the true positive proportions (TPP) observed in the 500 respective simulation runs, whereby $\mathrm{TPP}(\widehat{\fatv}) := \left(R_{\widehat{\fatv}} - V_{\widehat{\fatv}}\right) / s_Y$ (and analogously $\mathrm{TPP}(\widehat{\fatu})$).
Naturally, it is desirable to have a high TPR and a low FDR, but there is a trade-off between the two quantities.

Figure~\ref{fig:one-cross-cor-block-Sigma-sim-FDR} shows the estimated $\FDR(\widehat{\fatv})$, and Figure~\ref{fig:one-cross-cor-block-Sigma-sim-TPR} shows the estimated $\TPR(\widehat{\fatv})$ for every considered method at every considered simulation setting. Because $X$ and $Y$ are generated by the same procedure with the same dimensions and with the same values considered for $s_X$ and $s_Y$, and because the proposed method is symmetric in the estimation of $\fatu$ and $\fatv$, the performance of the estimators $\widehat{\fatu}$ and $\widehat{\fatv}$ is exactly the same. Thus, for clarity of presentation we omit the inclusion of results for $\widehat{\fatu}$ in the figures of this section.

We observe that the estimated FDR of the proposed method consistently stays below the target upper bound $q$, being in fact a little too conservative by reaching at most a level of about $q/2$. This validates our method's FDR-correcting behavior, which is its intended purpose.

Since we have enforced that the preliminary estimates $\widehat{\fatu}^{(0)}$ and $\widehat{\fatv}^{(0)}$ each have about $\frac{n_2}{2} = 100$ non-zero entries, thus implicitly assuming that $s_X$ and $s_Y$ are sufficiently smaller than 100, the performance in simulations with $s_X, s_Y < 80$ differs from the performance observed when $s_X, s_Y \in \{80, 100, 120\}$. As expected we have that $\FDR(\widehat{\fatv}) \approx 0$ when $s_Y \geq 100$. Similarly we observe that as $s_X$ is increasing from 1 to 80, $\FDR(\widehat{\fatv})$ grows towards the nominal level $q$ and $\TPR(\widehat{\fatv})$ grows towards 1. The growth behavior is however significantly slowed down or stagnates when $s_X \geq 80$.

Compared to the FDR-corrected sparse CCA procedure, the other considered sparse CCA methods fail to consistently provide a satisfactory FDR level across all considered sparsity settings. Apart from our technique only the permutation based method of \cite{Witten2009} adapts to the true sparsity of the data, and yields an FDR that consistently stays at or below a level of about 0.2 regardless of the sparsity of the true solution. However, as evidenced by Figure~\ref{fig:one-cross-cor-block-Sigma-sim-TPR}, in most settings our procedure has a higher TPR than the permutation based method. Among all considered methods and across all sparsity settings the 5-fold CV based procedure results in the highest observed TPR values, but its FDR values are also among the largest ones. This can be expected because the utilized cross-validation criterion (which is similar to the one used in \cite{Parkhomenko2009}) aims to maximize model fit rather than to avoid false discoveries, i.e., the canonical correlation is maximized at the expense of the selection of many irrelevant features.
Moreover, it is evident from the two figures that $\ell_1$-penalized CCA with the penalty parameter $\lambda$ fixed at an arbitrary value (i.e., selected in a way that is not data-driven) will rarely produce the desired outcome in terms of maintaining a low FDR and a high TPR.

Additionally, we investigate the distribution of TPP values achieved by our method. When the cross-correlation structure between $X$ and $Y$ is as in the above example, where the cross-correlated variables fall into a single correlated block within $X$ and a single correlated block within $Y$ (as depicted in Figure~\ref{fig:Sigma_constant}), the sparse CCA procedure has a strong tendency to either select all of the cross-correlated features or none.
Likewise, when the cross-correlated features of $X$ (or $Y$) fall within $k$ distinct correlated blocks within $X$ (or $Y$), the TPP distribution seems to be discrete and concentrated at (at most) $(k+1)$ distinct values which are spaced apart at equal distances between 0 and 1.
That is, when the cross-correlated features fall into distinct correlation blocks within $X$ or $Y$, the method tends to either select an entire correlated block of variables within $X$ or $Y$, or to miss it entirely.
We confirmed this phenomenon with additional simulation studies, where the correlation structure within $X$ and within $Y$ is varied respectively.  Figure~\ref{fig:TPR_latent_variable_simulation} shows the TPP distribution, when the cross-correlated features, of which there are 60 in $X$ and 60 in $Y$, fall within $k \in \{1,2,3,15\}$ distinct correlation blocks within each matrix. The distributions' discrete character with at most $(k+1)$ peaks can be clearly observed, and as $k$ gets larger (e.g., $k = 15$) the TPP distribution appears more continuous. In fact a similar behaviour can also be observed for other sparse CCA methods.

Summing everything up, the simulation studies of this section support the theoretical derivation of the FDR-corrected sparse CCA method of Procedure~\ref{proc:FDRcorrectedSCCA}. As expected, the method controls the false discovery rates at the specified level $q$ very well, when applied to Gaussian data. This adds empirical evidence to the properties implied by the mathematical derivation.
Moreover we observe that Procedure~\ref{proc:FDRcorrectedSCCA} achieves the best trade-off between FDR and TPR when compared to several other commonly-used sparse CCA methods. It generally outperforms the permutation based method of \cite{Witten2009} and the very sparse $\ell_1$-penalized CCA ($\lambda = 0.01$) in terms of both FDR and TPR; and while some other methods (the $\ell_1$-penalized CCA with fixed $\lambda = 0.3$, or in conjunction with  5-fold CV) generally achieve a higher TPR, they perform poorly in terms of FDR in comparison to the proposed method.

\begin{figure*}[ht]
\centering
\subfloat[Estimated FDR by true sparsity of $\fatu$ and $\fatv$.]{\includegraphics[width=0.9\textwidth]{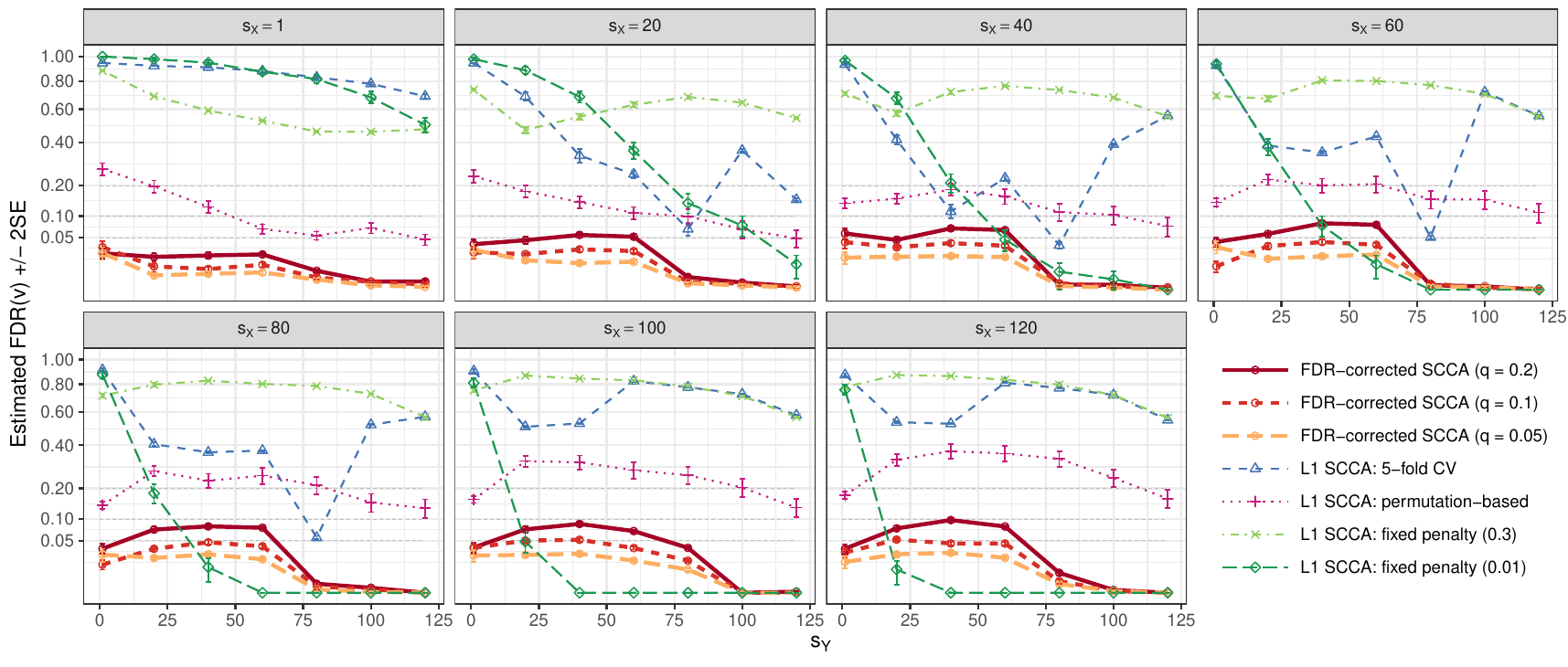}%
\label{fig:one-cross-cor-block-Sigma-sim-FDR}}
\hfil
\subfloat[Estimated TPR by true sparsity of $\fatu$ and $\fatv$.]{\includegraphics[width=0.9\textwidth]{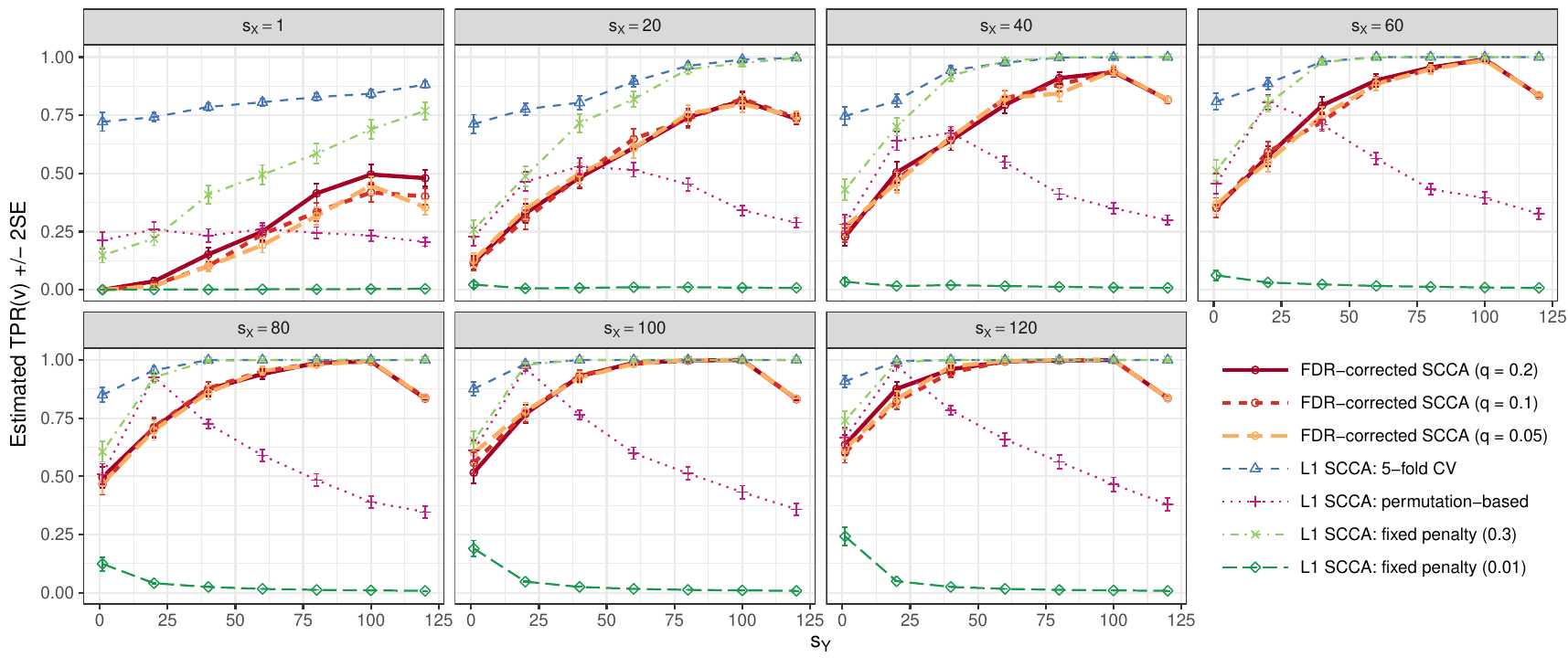}%
\label{fig:one-cross-cor-block-Sigma-sim-TPR}}
\caption{The performance of Procedure~\ref{proc:FDRcorrectedSCCA}, the proposed FDR-corrected sparse CCA procedure, at different target FDR levels $q = 0.05, 0.1, 0.2$ is compared to the $\ell_1$-penalized sparse CCA (L1 SCCA) with the penalty parameters either set to a fixed value (high sparsity: $\lambda = 0.01$, or moderate sparsity: $\lambda = 0.3$), or selected via the permutation based method of \cite{Witten2009}, or selected via a 5-fold cross-validation procedure (an approach based on \cite{Parkhomenko2009}). The matrices $X, Y \in \R^{600\times 1500}$ have Gaussian entries. All values are averaged over 500 simulation runs performed at each combination of parameters, and  all error bars correspond to $\pm2\mathrm{SE}$.
  (a) The estimated false discovery rate (FDR) in $\widehat{\fatv}$ is shown (on a square root scale) with respect to changes in $s_X$, the number of nonzero entries of $\fatu$, and $s_Y$, the number of non-zero entries of $\fatv$. The estimated FDR of Procedure~\ref{proc:FDRcorrectedSCCA} always stays below the specified upper bound $q$, with a sharp decrease at about $s_Y = 70$ due to the fact that the sparsity of the preliminary solution $\widehat{\fatv}^{(0)}$ was constrained to be approximately equal to 100 in the application of Procedure~\ref{proc:FDRcorrectedSCCA}. Estimated FDR of other methods is either generally inflated, or not adaptive to changes in $s_X$ and $s_Y$.
  (b) The estimated true positive rate (TPR, or sensitivity) of $\widehat{\fatv}$ is shown with respect to changes in $s_X$ and $s_Y$. Procedure~\ref{proc:FDRcorrectedSCCA} is competitive with the other methods in terms of TPR, even though its FDR is much lower than that of other methods (it achieves the best FDR-TPR trade-off). The TPR curves of Procedure~\ref{proc:FDRcorrectedSCCA} have a dip at $s_Y = 100$ because the sparsity of the preliminary solution $\widehat{\fatv}^{(0)}$ was constrained to be approximately equal to 100 in the procedure's application.}
\label{fig:one-cross-cor-block-Sigma-sim}
\end{figure*}

\begin{figure}[ht]
  \centering
  \includegraphics[width=0.45\textwidth]{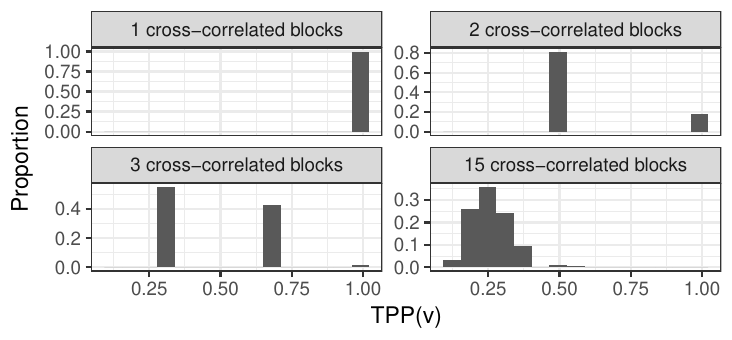}
  \caption{Shown is the empirical distribution of $\mathrm{TPP}(\widehat{\fatv})$ across 500 applications of the FDR-corrected sparse CCA method (Procedure~\ref{proc:FDRcorrectedSCCA}). The matrices $X, Y \in \R^{900\times 900}$ have Gaussian entries. The cross-correlated features, of which there are 60 in $X$ and 60 in $Y$, fall within $k \in\{1, 2, 3, 15\}$ distinct correlation blocks within each matrix. The resulting TPP distributions are discrete and concentrated at $(k+1)$ or fewer values, while appearing more continuous for larger $k$.}
  \label{fig:TPR_latent_variable_simulation}
\end{figure}

\subsection{Simulation study with non-Gaussian data (investigating robustness to distributional assumptions)}
\label{sec:hybrid_simulation}

In this section we consider more realistic simulation studies in the context of imaging genomics.
The matrix $X$ represents SNP data, which are not normally distributed, and the matrix $Y$ represents approximately Gaussian features obtained from brain fMRI data.
There are $n = 956$ subjects in the dataset (i.e., the number rows in $X$ and in $Y$).
We use real SNP data, which are also used in the real data analysis of Section~\ref{sec:upenn_data} (see Section~\ref{sec:data_acquisition} for more detail about the data), to generate $X$, as described in the following.
First, within each gene, PCA is performed on the SNPs that correspond to that gene; i.e., for each gene, PCA is performed separately on a sub-matrix consisting of only those columns of the full SNP matrix which contain SNP data corresponding to that particular gene.
Then, within each obtained gene-specific sub-matrix of principle components, only those principal components that explain at least 75\% of the variance within that gene are kept as features for further analysis.
The sub-matrices of features that were retained for each gene, when concatenated, constitute the full matrix of genomic features.
Many of the resulting features are clearly non-Gaussian, because many variables are heavily skewed, and many are clearly discrete.\footnote{For example, of the resulting 60372 variables, 8356 attain 10 or fewer distinct values, and (after standardization to sample standard deviation of 1) the method of moments estimator of skewness is greater than 1 for about 10\% of all variables, and greater than $\frac{1}{2}$ for about 28\% of all variables.}
Of the generated $60372$ genomic features, $p_X = 10000$ variables are randomly selected to form the matrix $X$ in each repetition of the simulation. The matrix $Y$ is randomly sampled with independent standard normal entries, consisting of $p_Y = 10000$ columns.
We cross-correlate the data in $X$ and $Y$ via the following procedure. Given, the number of cross-correlated features, ${s\in\{1,2,\ldots,p\}}$, where $p := p_X = p_Y$, and a parameter $\rho^{XY}\in(0, 1)$, we first generate a latent variable $\mathbf{z}\in\R^n$ with standard normal entries. Then, for each $i \in\{1,2,\ldots,s\}$, we replace the column $\mathbf{x}_i$ of $X$ with
\begin{equation*}
  \left(\sqrt{1-\rho^{XY}}\right)\mathbf{x}_i + \left(\sqrt{\rho^{XY}}\right) \mathbf{z},
\end{equation*}
and likewise we replace $\mathbf{y}_i$ with
\begin{equation*}
  \left(\sqrt{1-\rho^{XY}}\right)\mathbf{y}_i + \left(\sqrt{\rho^{XY}}\right) \mathbf{z}.
\end{equation*}
Thus, after the transformation, for the cross-correlated data it holds that
\begin{equation*}
  \Cor(\mathbf{x}_i, \mathbf{y}_j) = \Var\left(\left(\sqrt{\rho^{XY}}\right) \mathbf{z}\right) = \rho^{XY},
\end{equation*}
for all $i, j \in \{1,2, \ldots, s\}$.
The considered numbers of cross-correlated variables are $s = 30, 60, 90, 120, 150$. For the amount of cross-correlation we consider two choices, $\rho^{XY} = 0.5$ and $\rho^{XY} = 0.9$ (we restrict the presentation to positive cross-correlations, because negative cross-correlations did not yield any noticeable changes in the simulation results). The simulation with each combination of parameters is performed 1000 times.

We apply Procedure~\ref{proc:FDRcorrectedSCCA} with a target level $q = 0.1$ to this non-Gaussian dataset. In Step 2 of the procedure, we tune the penalty parameters such that the preliminary estimates $\widehat{\fatu}^{(0)}$ and $\widehat{\fatv}^{(0)}$ each contain approximately $\frac{n/3}{2} \approx 160$ non-zero entries (see Section~\ref{sec:gaussian_data} for a discussion of this choice). Figure~\ref{fig:hybrid_simulation_FDR_combined} shows that the estimated $\FDR\left(\widehat{\fatu}\right)$ and $\FDR\left(\widehat{\fatv}\right)$ levels are always nearly equal to or below the target upper bound $q$. The FDR values are close to $q$ when $s$ is substantially below 160, which is the enforced number of non-zero entries in $\widehat{\fatu}^{(0)}$ and $\widehat{\fatv}^{(0)}$. Figure~\ref{fig:hybrid_simulation_TPR_combined} shows the estimated $\TPR\left(\widehat{\fatu}\right)$ and $\TPR\left(\widehat{\fatv}\right)$. As one would expect, the TPR values increase with an increase in $s$ or in $\rho^{XY}$.
One drawback seen in Figure~\ref{fig:hybrid_simulation_TPR_combined} is that our method is very low-powered in extremely sparse settings, more so than the results from Section~\ref{sec:gaussian_data}, which is explained by the higher dimensionality of the data here.
To summarize, we conclude that the proposed procedure retains its FDR-controlling properties on this non-Gaussian dataset, which is structured similarly to the real data analysed in the next section. This gives us confidence on the application of the proposed procedure to real data.

\begin{figure*}[ht]
\centering
\subfloat[Estimated FDR by sparsity of the true solution.]{\includegraphics[width=0.45\textwidth]{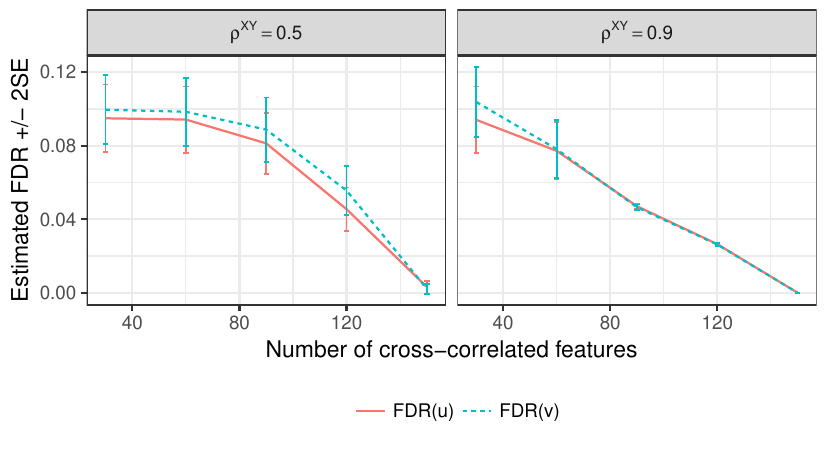}%
\label{fig:hybrid_simulation_FDR_combined}}
\hfill
\subfloat[Estimated TPR by sparsity of the true solution.]{\includegraphics[width=0.45\textwidth]{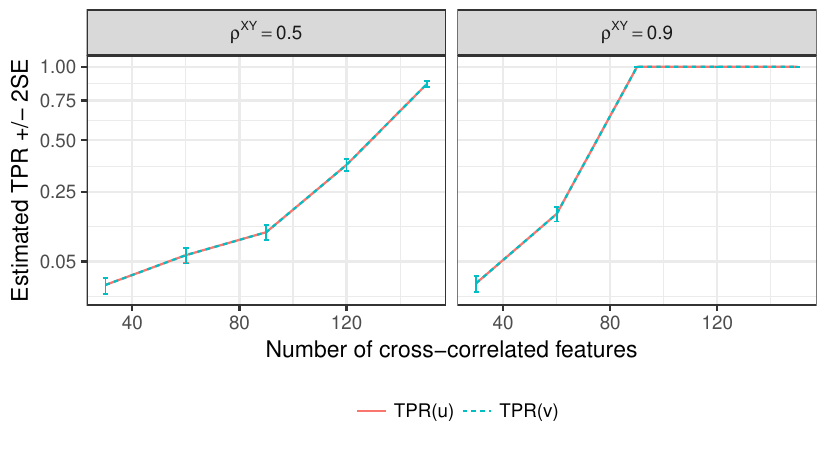}%
\label{fig:hybrid_simulation_TPR_combined}}
\caption{The proposed FDR-corrected sparse CCA, Procedure~\ref{proc:FDRcorrectedSCCA}, was applied to ``hybrid'' data, which consist of a matrix $X\in\R^{956\times 10000}$ based on real SNP data, and a matrix $Y\in\R^{956\times 10000}$ with random Gaussian entries. The parameter $\rho^{XY}$ signifies the amount of cross-correlation between features of $X$ and $Y$.
(a) The estimated $\FDR(\widehat{\fatu})$ and $\FDR(\widehat{\fatv})$ consistently stay below the target upper bound $q = 0.1$ regardless of the number of truly cross-correlated features. The FDR values decrease as the true number of cross-correlated features approaches the bound ($\frac{n/3}{2} \approx 160$) placed on the number of non-zero entries in $\widehat{\fatu}^{(0)}$ and $\widehat{\fatv}^{(0)}$.
(b) The estimated $\TPR\left(\widehat{\fatu}\right)$ and $\TPR\left(\widehat{\fatv}\right)$. TPR values increase with $\rho^{XY}$, and with an increase in the number of cross-correlated features present in the data.
}
\label{fig:hybrid_simulation}
\end{figure*}

\section{Application to imaging genomics}
\label{sec:upenn_data}

This section presents an application of Procedure~\ref{proc:FDRcorrectedSCCA}, the proposed FDR-corrected sparse CCA procedure, to a large imaging genomics dataset.
Our goal is to identify genomic regions which exhibit a significant relationship with brain functional connectivity measures, and vice versa.
In Section~\ref{sec:data_acquisition}, a brief description of the data as well as of the performed pre-processing steps is followed by a discussion of the presence of possible confounding factors in the data. Finally, in Section~\ref{sec:FDRcorrectedSCCA_on_PNC}, we apply the proposed procedure to the dataset, and discuss the obtained results.

\subsection{Data acquisition, pre-processing, and exploratory analysis}
\label{sec:data_acquisition}

The Philadelphia Neurodevelopmental Cohort \cite{Satterthwaite2014-jy} (PNC) is a large-scale collaborative study between the Brain Behaviour Laboratory at the University of Pennsylvania and the Children's Hospital of Philadelphia. It contains, among other modalities, an emotion identification (EMID) fMRI task, and SNP arrays for over 900 children, adolescents, and young adults.

The EMID task fMRI data were pre-processed using the software SPM12 \cite{Ashburner2014-xn}. The performed pre-processing steps included an adjustment for head movement, spatial normalization to standard MNI adult template (spatial resolution of $2\mm\times 2\mm\times 2\mm$), and a spatial smoothing with a $3\mm$ FWHM Gaussian kernel. The influence of motion (6 parameters) was further addressed using a regression procedure, and the functional time series were band-pass filtered using a 0.01--0.1Hz frequency range.
The pre-processed data were decomposed into subject-specific spatial maps and time courses using a group-level spatial ICA as implemented in the GIFT toolbox\footnote{\url{http://mialab.mrn.org/software/gift/}}. The number of components in group ICA was set to $C = 100$. For each subject $i \in \{1,2,\dots,n\}$ a functional connectome (FC) was estimated from the $C$ subject-specific time courses as their $C \times C$ sample covariance matrix.
Generally a subject's FC can be obtained by calculating Pearson correlations between the blood-oxygenation-level dependent (BOLD) time series corresponding to each pair of regions of interest (ROI) in the subject's brain.
Here we chose to compute the stationary FC on group ICA time courses rather than on the BOLD time series directly, in order to consolidate the relevant signals within a smaller number of components.
Each subject's estimated FC was flattened into a vector $\faty_i$ of length $\frac{C(C-1)}{2} = 4950$. The matrix $Y \in \R^{n\times \frac{C(C-1)}{2}}$ was constructed as the concatenation of the vectors $\faty_i$ as its rows, where $i \in \{1,2,\dots,n\}$.

The SNP data available in PNC were acquired using 6 different platforms. We kept only the subjects genotyped by the four most commonly used platforms (all manufactured by Illumina). The PLINK software \cite{Purcell2007-bx} was used for SNP data pre-processing, including Hardy-Weinberg equilibrium tests with significance level $1\mathrm{e}{-5}$ for genotyping errors, extraction of common SNPs (MAF $>5\%$), linkage disequilibrium pruning with threshold of $0.9$ for highly correlated SNPs, and removal of SNPs with missing call rates $>10\%$ and samples with missing SNPs $>5\%$. The remaining missing values were imputed by Minimac3 \cite{Das2016-ei} using the reference genome from 1000 Genome Project\footnote{\url{http://www.1000genomes.org}} Phase 1 data.
The resulting dataset contains values for $98804$ SNPs, representing the presence of two minor alleles by 2, one minor allele by 1, and the absence of a minor allele by 0.

Of course the proposed FDR-corrected sparse CCA procedure addresses only one source of false discoveries -- the multiple testing problem.
In practice, false discoveries may arise due to other reasons as well, because the structure of real world data is typically more complex than the idealized mathematical description, and not all mathematical assumptions will be met.
For example, in genomic studies, allele frequencies as well as the size and composition of linkage disequilibrium blocks may significantly differ between groups of subjects due to systematic ancestry differences, which may lead to spurious associations.
Since the PNC samples are multi-ethnic and were acquired with multiple genotyping platforms, this issue needs to be properly addressed here.
If the variability in ethnicity or genotyping platform (or some other unobserved external factor) affects only the SNP data, but does not affect the fMRI data, we expect our method to perform similarly well as observed empirically in the simulations of Section~\ref{sec:hybrid_simulation}, which verified the good behaviour of our method when applied to the real SNP data as $X$ and a synthetic dataset as $Y$.
However, the simulations of Section~\ref{sec:hybrid_simulation} do not cover the situation when a variable is related to both the SNP data and the fMRI data simultaneously, in which case it is in danger of becoming a confounder and yielding false discoveries.
Commonly, principal component analysis (PCA) is used within an exploratory analysis context to investigate the presence of such confounding factors in genomic studies (e.g., see \cite{Price2006-fn, Leek2010-ct}).
After performing PCA on the SNP matrix and on the FC matrix, we investigated, in both cases, for potential relationships between the first 20 principle components (PCs) and external variables such as ethnicity, genotyping platform, and age.
No external variables appeared to be strongly related to both the SNP values and the FC features simultaneously by the performed exploratory analyses.\footnote{We provide a summary of these results here; see Supplementary Materials for more detail (available in the supplementary files/multimedia tab).}
When plotting PCs against each other, it is evident that ethnicity differences are strongly related to the first three SNP-specific PCs, whereas no such relationships were observed for the fMRI data. Similarly, a PCA analysis of the considered fMRI features indicated a clear relationship between the subjects' FC and age values, but (as expected) age shows no relationship with the SNP data. No noticeable associations with the genotyping platform differences or gender were observed from PCA analysis for the SNP data or the considered fMRI features in this dataset.
Since no factors seem to have a clear relationship to both the genomic and the brain imaging data simultaneously, we proceeded with the application of the proposed FDR-corrected SCCA procedure to this dataset (we however revisit this issue once a set of features is identified by our method).

\subsection{Identification of cross-correlations between genomic and functional brain connectivity features on PNC data}
\label{sec:FDRcorrectedSCCA_on_PNC}

We apply Procedure~\ref{proc:FDRcorrectedSCCA} to a matrix derived from the PNC SNP data and a matrix derived from the FC values resulting from the PNC EMID task fMRI data.
The subset of PNC subjects for whom we have both types of data available is of size $n = 811$.
Following the process applied in the simulations of Section~\ref{sec:hybrid_simulation}, in this analysis we tune the penalty parameters in Step 2 of Procedure~\ref{proc:FDRcorrectedSCCA} such that $\widehat{\fatu}^{(0)}$ and $\widehat{\fatv}^{(0)}$ each contain approximately $\frac{n/3}{2} \approx 135$ non-zero entries.
Because we aim to capture all relevant genomic and brain imaging features within this limited number of non-zero entries of $\widehat{\fatu}^{(0)}$ and $\widehat{\fatv}^{(0)}$ respectively, we perform some feature engineering to focus the relevant signals within a smaller number of transformed features.
For the SNP data this is accomplished by gene-specific PCA transformations, as described in Section~\ref{sec:hybrid_simulation}.
This reduces the dimensionality of the untransformed SNP data to $p_X = 60372$ gene-level features, which form the matrix $X$.
In addition to consolidating the genomic variability within a smaller number of derived features, the utilized approach also facilitates the interpretation of the findings, because each column of $X$ directly corresponds to exactly one gene.
The fMRI data is transformed to functional connectivity profiles, as described in Section~\ref{sec:data_acquisition}, whereby we obtain FC measures on group ICA time courses rather than on the fMRI time series directly, in order to consolidate the relevant signals within a smaller number of components. The resulting matrix $Y$ has $p_Y = 4950$ columns (FC features).
The columns of $X$ and $Y$ are standardized to have sample means equal to zero, and sample standard deviations equal to one.
We apply Procedure~\ref{proc:FDRcorrectedSCCA} with an upper bound of $q = 0.1$ on the false discovery rates.
When we split the data into subsets in Step 1 of Procedure~\ref{proc:FDRcorrectedSCCA}, we ensure that the empirical joint distribution of variables representing ethnicity, genotyping platform, age group, and gender is consistent across the generated subsets.
%In Step 2 of the procedure we tune the penalty parameters such that $\widehat{\fatu}^{(0)}$ and $\widehat{\fatv}^{(0)}$ each contain approximately $\frac{n}{4} = 202.75$ non-zero entries.
The obtained FDR-corrected sparse CCA solution $\left\{\widehat{\fatu}, \widehat{\fatv}\right\}$ includes 129 genomic features and 107 FC features.

The sparse CCA results can help explain how genetic variation between subjects influences the subjects' functional brain connectivity as estimated from fMRI during an emotion identification task.
For example, consider the top 10 most significant genomic features (when ranked by p-values from Step 5 of Procedure~\ref{proc:FDRcorrectedSCCA}). As summarized in Table~\ref{tab:top10genes}, for each gene we found previous studies of association with cognitive ability, brain development or connectivity, or neurodevelopmental disorders. This was expected given that a cognitive task was performed by the subjects during the MRI acquisition and that the PNC subjects belong to an age range (8-21 years) at which crucial processes of neurodevelopment are taking place.
In fact, a recent study \cite{Kaufmann2017-kp} observes a notable variability in the PNC cohort with respect to brain dysfunction, where in addition to a group of healthy controls consisting of 153 PNC subjects, three groups were found to have increased (prodromal) symptoms of attention deficit disorder (107 PNC subjects), schizophrenia (103 PNC subjects), and depression (85 PNC subjects).

The FC features describe pair-wise interactions of group ICA components obtained from fMRI data (see Section~\ref{sec:data_acquisition}). By extracting within each spatial component only the ROI corresponding to the activation coefficient with the largest absolute value, we can map the selected FC features to pairs of ROI (we use the definitions of the AAL parcellation map \cite{Tzourio-Mazoyer2002-mw}). The top 10 most significant FC features (according to p-values from Step 5 of Procedure~\ref{proc:FDRcorrectedSCCA}) are shown in Table~\ref{tab:top10FC}.
The analysis implies that activations in these ROI correspond to the processing of emotional faces, and are significantly regulated by the genomic differences across subjects.
For the EMID task in PNC fMRI data the subjects viewed images of emotional faces and were asked to label the emotion displayed.
A wide network of ROI is engaged during this task, because human faces are complex stimuli which require the performance of many subtasks such as basic visual processing, identification of other individuals, and extraction of nonverbal and affective information.
Previous fMRI studies associate many of the detected ROI with the identification or processing of emotions.
We find support in previous studies for all of the ROI shown in Table~\ref{tab:top10FC}.
Studies of BOLD fMRI response to images of emotional faces identify increased positive activations in the superior temporal gyri and the middle occipital gyri in combination with negative activations in the superior frontal gyri \cite{Radua2010-xl}, positive activation in the middle frontal gyrus for individuals with schizophrenia \cite{Radua2010-xl}, and neural activity in the right angular gyrus in young subjects \cite{Iidaka2002-ba}.
Additionally, it has been shown that the anterior cingulate cortex operates together with the lateral and medial prefrontal cortex (including the superior frontal gyrus) in order to regulate both cognitive and emotional processing \cite{Bush2000-pp, Etkin2011-gw}. Emotional processing has also been previously associated with activations in the supplementary motor area \cite{Etkin2011-gw}.
The superior temporal gyrus is also known to be associated with autism \cite{Bigler2007-hg} (a condition that impairs the processing of emotional faces), while the calcarine fissure is involved in processing of visual information, and was also found to be selectively activated during the processing of faces in association with the intensity of the displayed emotion \cite{Morris1998-ht}.

\begin{table}[ht]
  \begin{center}
    \begin{tabular}{lp{6cm}}
      \hline
       Gene & Previously studied in association with \\
      \hline
      DAB1 & Autism spectrum disorder: \cite{Fatemi2005-gb, Li2013-zn}; schizophrenia: \cite{Impagnatiello1998-bl, Verbrugghe2012-id}; brain development: \cite{Herrick2004-dg, Pramatarova2008-xf, Feng2009-to}. \\
      NAV2 & Brain biology \& development: \cite{Marzinke2013-nq}. \\
      WWOX & Cognitive ability: \cite{McClay2011-gz}; brain development: \cite{Tabarki2015-cb}. \\
      CNTNAP2 & Autism: \cite{Alarcon2008-ke, Penagarikano2012-lq}; brain connectivity: \cite{Dennis2011-ye}; brain development: \cite{Tan2010-fp}; schizophrenia \& major depression: \cite{Ji2013-fx}; cognitive ability (linguistic processing): \cite{Whalley2011-cj, Whitehouse2011-wf}. \\
      NELL1 & Brain biology \& development: \cite{Kuroda1999-zk}. \\
      PTPRT & Brain biology \& development: \cite{Ensslen-Craig2004-ta, McAndrew1998-ez, Besco2006-id}. \\
      FHIT & Cognitive ability: \cite{Davis2010-lz}; autism spectrum disorders: \cite{Corominas2014-cu, Salyakina2010-aw}; attention deficit hyperactivity disorder: \cite{Lasky-Su2008-ou}. \\
      MACROD2 & Autism spectrum disorder: \cite{Anney2010-ds, Curran2011-ho, Jones2014-rp}. \\
      LRP1B & Cognitive function: \cite{Poduslo2010-we}. \\
      DGKB & Brain biology \& development: \cite{Hozumi2009-uf}; bipolar disorder: \cite{Caricasole2002-yw, Kakefuda2010-mh}. \\
      \hline
    \end{tabular}
  \end{center}
  \caption{Top 10 most significant among the identified genes.}
  \label{tab:top10genes}
\end{table}

%%%%%%%%%%%%%%%%%%%%%%%%%%%%%%%%%%%%%%%%%%%%%%%%%%%%%%%%%%%%%%%%%%%%%
% [REMARK, Alexej, 11/15/2017]
% I have searched the publication list on genecards.org for the keywords:
% brain, neuro, cogni, devel, autism, schizo, disorder, adhd, hyper
% (more or less).
% Reported papers were hand-picked from the results of that search.
%%%%%%%%%%%%%%%%%%%%%%%%%%%%%%%%%%%%%%%%%%%%%%%%%%%%%%%%%%%%%%%%%%%%%

\begin{table}[ht]
  \begin{center}
    \begin{tabular}{p{4cm}p{4cm}}
      \hline
      ROI \#1 & ROI \#2 \\
      \hline
      anterior cingulate and paracingulate gyri (left) & superior frontal gyrus, medial (right) \\
      superior temporal gyrus (left) & superior temporal gyrus (left)\\
      superior frontal gyrus, medial (right) & superior frontal gyrus, medial (left)\\
      middle occipital gyrus (left) & superior frontal gyrus, medial (left) \\
      supplementary motor area (left) & anterior cingulate and paracingulate gyri (left)\\
      supplementary motor area (left) & middle occipital gyrus (left) \\
      middle temporal gyrus (left) & middle occipital gyrus (left) \\
      middle frontal gyrus, orbital part (left) & superior frontal gyrus, medial (left) \\
      middle occipital gyrus (right) & superior frontal gyrus, medial (right) \\
      calcarine fissure and surrounding cortex (right) & angular gyrus (right) \\
      \hline
    \end{tabular}
  \end{center}
  \caption{Top 10 most significant FC features mapped to ROI pairs.}
  \label{tab:top10FC}
\end{table}

Similarly we were able to find previous work that supports our findings for many other identified genes and brain regions. But because the rather large number of identified features prohibits the presentation of a full literature research for all of them, and because in this work we mainly focus on methodology development rather than on biological insights, we omit a further coverage of the possible biological implications of the CCA results here.
However, we briefly revisit the concern expressed in Section~\ref{sec:data_acquisition} regarding potential confounding effects in the application of our sparse CCA method to the PNC data.
To investigate whether ethnicity differences may have substantially influenced the results, we extract a subsample consisting of all African American and Caucasian/white subjects from the analyzed data, consisting of 313 and 406 subjects respectively, while any other distinct ethnicity group had less than $10$ subjects, and therefore was excluded from the following investigation.
On this subsample we obtain an individual p-value of association with ethnicity for each genomic and for each FC feature using the ANOVA framework.
A subset containing only the ANOVA p-values corresponding to the selected genomic (or FC) features is formed. Then a Kolmogorov-Smirnov test with significance level 0.05 is used to compare the empirical distribution of p-values within this subset to the empirical distribution of the whole set of ANOVA p-values, corresponding to the whole set of genomic (or respectively FC) features.
This comparison yields that the genomic as well as the FC features selected as significant by our method indeed tend to exhibit a much stronger association with ethnicity than other genomic or FC features.
Thus there is a possibility that some of the identified genomic and FC features may have been determined by systematic ancestry differences in the sample.
The fact that we can find previous work that supports our finding provides evidence to the claim of no ethnic confounding, but unfortunately this question cannot be fully resolved without an analysis on new data.
Using an analogous approach with respect to other external factors such as genotyping platform, age, and gender, no similar effects have been observed, supporting the initial conclusion of Section~\ref{sec:data_acquisition} that these factors are not confounding the analysis.

To conclude, we have identified sparse sets of genes and FC features carrying a substantial amount of cross-correlation in a multi-ethnic cohort\footnote{Also of note is that the ethnicity distribution in PNC is similar to much of the North American population.}, while correcting for the type of false discoveries which appear as cross-correlated due to random noise in a high-dimensional problem.
These results, however, should be taken with care, because some of the identified cross-correlated features may represent indirect relationships by proxy of systematic ancestry differences.

\section{Discussion}
\label{sec:discussion}

As discussed in Section~\ref{sec:introduction}, model selection in sparse CCA, especially as it pertains to the unknown sparsity level, is largely an unsolved problem.
In this work we have proposed a notion of FDR in the context of sparse CCA as a criterion for the selection of an appropriate sparsity level. At the same time we have proposed an FDR-corrected sparse CCA method, which is adaptive to the unknown true sparsity of the canonical vectors. With theoretical arguments and a series of simulation studies we have shown that the proposed method controls the FDR of sparse CCA estimators at a user-specified level.

Interestingly, even though our method was analytically derived under Gaussian assumptions, in the considered non-Gaussian simulations it still achieved FDR control. Thus it would be interesting to extend the theoretical argumentation for FDR control of the proposed method beyond the Gaussian scenarios.
It would also be valuable to investigate better ways of selecting the sparsity constraints on the preliminary estimates $\widehat{\fatu}^{(0)}$ and $\widehat{\fatv}^{(0)}$, such that all cross-correlated features of $X$ and $Y$ are identified.
Especially in the high-dimensional case a trade-off arises, because when $\widehat{\fatu}^{(0)}$ and $\widehat{\fatv}^{(0)}$ are too dense, the applicability of the asymptotic distribution from Theorem~\ref{thm:XtYv_is_asymptotically_normal} becomes questionable, and the FDR-correction step can significantly lose power.
In fact, the FDR-correction step in our procedure would improve substantially, if we had a theoretical result which is analogous to Theorem~\ref{thm:XtYv_is_asymptotically_normal} but regards $\fatv$ as a random variable and considers the asymptotic behaviour of $X^T Y \fatv$ as $p_X, p_Y \to \infty$.
Furthermore, especially in high-dimensional scenarios, the proposed method tends to be too conservative, having an FDR substantially below the nominal level. An adjustment that would result in an FDR equal to the nominal level is very desirable, because it would increase the detection power.
These are the important theoretical questions that we leave for future research.
As to the application in imaging genomics, there is clearly a potential for improvement in how the genomic and brain fMRI data are represented as features within $X$ and $Y$. In Section~\ref{sec:FDRcorrectedSCCA_on_PNC} we transformed the SNP data using PCA within each gene, and applied group ICA to the fMRI data, before the application of sparse CCA\@. This allowed us to select features at the gene and ROI level.
Alternatively, one could apply sparse CCA to the untransformed $\{0,1,2\}$-valued SNP data and to voxel-wise fMRI contrasts, and then during the FDR-correction step one could consolidate the voxel specific p-values into a single p-value for each ROI, and the per-SNP p-values into gene-based p-values, in order to obtain gene and ROI level results with possibly a higher detection power. Such an approach can be based on a statistically rigorous foundation by including a method analogous to VEGAS \cite{Liu2010-tb} (or a similar gene-based technique) as an additional step in Procedure~\ref{proc:FDRcorrectedSCCA}.

In conclusion, we hope that this work can motivate a scholarly conversation and further research about FDR control in the context of sparse CCA.

% if have a single appendix:
%\appendix[Proof of the Zonklar Equations]
% or
%\appendix  % for no appendix heading
% do not use \section anymore after \appendix, only \section*
% is possibly needed

% use appendices with more than one appendix
% then use \section to start each appendix
% you must declare a \section before using any
% \subsection or using \label (\appendices by itself
% starts a section numbered zero.)
%

\appendices
\section{Asymptotic Normality proof}
\label{sec:XtYv_is_asymptotically_normal_proof}

In this Section we present the proof of Theorem~\ref{thm:XtYv_is_asymptotically_normal}. We need the following Lemma.

\begin{lemma}
  Assume that $A\in\R^{p\times p}$ and $B\in\R^{p\times p}$ are symmetric matrices. Let $S\in\R^{p\times p}$ be a random matrix that follows a Wishart distribution with parameters $\Sigma\in\R^{p\times p}$, $p$ and $n$. Then it holds that
  \begin{equation*}
    \Cov\left( \tr(AS), \tr(BS) \right) = \tr\left( 2n A \Sigma B \Sigma \right).
  \end{equation*}
  \label{lemma:covariance_of_traces}
\end{lemma}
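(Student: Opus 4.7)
My plan is to reduce the Wishart-based computation to a calculation involving Gaussian quadratic forms, where the identity is well known. Since $S \sim W_p(\Sigma, n)$ admits the representation $S = \sum_{k=1}^n \fatx^{(k)} (\fatx^{(k)})^T$ with $\fatx^{(1)}, \dots, \fatx^{(n)}$ i.i.d.\ $\mathcal{N}(0, \Sigma)$, I would first rewrite
\begin{equation*}
  \tr(AS) = \sum_{k=1}^n \tr\!\left(A \fatx^{(k)} (\fatx^{(k)})^T\right) = \sum_{k=1}^n (\fatx^{(k)})^T A \fatx^{(k)},
\end{equation*}
and similarly for $\tr(BS)$. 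Because the $\fatx^{(k)}$ are mutually independent, the cross-sample covariances vanish, and
\begin{equation*}
  \Cov\!\left(\tr(AS), \tr(BS)\right) = \sum_{k=1}^n \Cov\!\left( (\fatx^{(k)})^T A \fatx^{(k)}, (\fatx^{(k)})^T B \fatx^{(k)} \right).
\end{equation*}

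The second step is to establish the single-sample identity $\Cov(\fatx^T A \fatx, \fatx^T B \fatx) = 2 \tr(A \Sigma B \Sigma)$ for $\fatx \sim \mathcal{N}(0, \Sigma)$ and symmetric $A, B$. The cleanest route is to diagonalize $\Sigma$ and write $\fatx = \Sigma^{1/2} \fatz$ with $\fatz \sim \mathcal{N}(0, I)$, absorbing $\Sigma^{1/2}$ into the matrices (set $\tilde A = \Sigma^{1/2} A \Sigma^{1/2}$ and $\tilde B = \Sigma^{1/2} B \Sigma^{1/2}$), and then compute $\E[(\fatz^T \tilde A \fatz)(\fatz^T \tilde B \fatz)]$ using Isserlis' (Wick's) theorem for fourth moments of standard normals, which yields $\tr(\tilde A)\tr(\tilde B) + 2 \tr(\tilde A \tilde B)$. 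Subtracting the product of means $\tr(\tilde A)\tr(\tilde B)$ and translating back via $\tr(\tilde A \tilde B) = \tr(A \Sigma B \Sigma)$ gives exactly $2 \tr(A \Sigma B \Sigma)$.

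Summing this single-sample identity over the $n$ independent copies yields $\Cov(\tr(AS), \tr(BS)) = 2n \tr(A \Sigma B \Sigma) = \tr(2n A \Sigma B \Sigma)$ as claimed. The main obstacle is the fourth-moment computation for the quadratic form, but this is routine: the only subtlety is keeping track of the symmetry of $A$ and $B$ when pairing indices under Wick's theorem, since asymmetric matrices would produce an extra $\tr(A \Sigma B^T \Sigma)$ term which here coincides with $\tr(A \Sigma B \Sigma)$ by symmetry. An alternative, essentially equivalent, route would be to use the known second-moment formula $\Cov(S_{ij}, S_{kl}) = n(\Sigma_{ik}\Sigma_{jl} + \Sigma_{il}\Sigma_{jk})$ for Wishart entries and expand $\Cov(\tr(AS), \tr(BS)) = \sum_{i,j,k,l} A_{ij} B_{kl} \Cov(S_{ij}, S_{kl})$ directly; however, the quadratic-form approach seems more transparent and avoids index-juggling.
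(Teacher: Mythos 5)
Your proof is correct, but it takes a different route from the paper. You reduce the Wishart statement to the single-sample Gaussian quadratic-form identity $\Cov(\fatx^T A \fatx, \fatx^T B \fatx) = 2\tr(A\Sigma B\Sigma)$ via the representation $S = \sum_{k=1}^n \fatx^{(k)}(\fatx^{(k)})^T$, and then compute the fourth moment with Isserlis'/Wick's theorem after whitening; you also correctly identify where the symmetry of $A$ and $B$ enters (merging the $\tr(\tilde A\tilde B^T)$ and $\tr(\tilde A\tilde B)$ terms). The paper instead works directly with the Wishart matrix: it expands $\Cov(\tr(AS),\tr(BS)) = \sum_{i,j}a_{ij}\sum_{k,l}b_{kl}\Cov(s_{ij},s_{kl})$, rewrites this as $\tr\bigl(A\cdot\mathrm{mat}\bigl(\Cov(\vectorization(S))\,\vectorization(B)\bigr)\bigr)$, and invokes a cited result (Eaton, Proposition 8.3) giving $\Cov(\vectorization(S)) = 2n\,\Sigma\otimes\Sigma$ on symmetric arguments, which is essentially the ``alternative, index-juggling'' route you mention at the end. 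Your approach is more self-contained (it derives the second-moment structure from first principles rather than citing it) and arguably more transparent, at the cost of implicitly assuming the integer-degrees-of-freedom Gaussian representation of the Wishart distribution --- which is exactly the setting in which the paper applies the lemma, so nothing is lost. Both arguments are valid and yield the same conclusion.
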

\begin{proof}
  Because $A$ and $B$ are symmetric, we have that
  \begin{small}
  \begin{align*}
    \Cov\left( \tr(AS), \tr(BS) \right) &=
    \Cov\left( \sum_{i,j = 1,\dots,p} a_{ij} s_{ij}, \sum_{k,l = 1,\dots,p} b_{kl} s_{kl} \right)\\
    &= \sum_{i,j = 1,\dots,p} a_{ij} \sum_{k,l = 1,\dots,p} b_{kl} \Cov\left( s_{ij}, s_{kl} \right)\\
    &= \sum_{i,j = 1,\dots,p} a_{ij} \left[ \mathrm{mat} \left( \Cov(\vectorization(S)) \cdot \vectorization(B) \right) \right]_{ij}\\
    &= \tr\left( A \cdot \mathrm{mat}\left( \Cov(\vectorization(S)) \vectorization(B) \right) \right),
  \end{align*}
  \end{small}
  where $\vectorization(\cdot)$ denotes vectorization, and $\mathrm{mat}(\cdot)$ denotes matricization (i.e., the inverse of vectorization).

  Now, Proposition 8.3 in \cite{Eaton2007} together with Kronecker product properties implies that
  \begin{align*}
    \Cov\left( \tr(AS), \tr(BS) \right) &= \tr\left( A \cdot \mathrm{mat}\left( (2n \Sigma \otimes \Sigma)  \vectorization(B) \right) \right)\\
    &= 2n \tr\left( A \Sigma B \Sigma \right).
  \end{align*}
\end{proof}

\subsection{Proof of Theorem~\ref{thm:XtYv_is_asymptotically_normal}}

\begin{proof}
%\fatx^{(1)}, \fatx^{(2)}, \dots, \fatx^{(n)} \in\R^{p_X}
  Because the random vectors
  $\left(
    \fatx^{(1)},
    \faty^{(1)}
  \right)$,
  $\left(
    \fatx^{(2)},
    \faty^{(2)}
  \right)$, $\dots$,
  $\left(
    \fatx^{(n)},
    \faty^{(n)}
  \right)$ are independent and identically distributed, and since
  \begin{equation*}
    \frac{1}{n} X^T Y \fatv
    = \frac{1}{n} \sum_{k=1}^n \fatx^{(k)} \cdot \left(\faty^{(k)}\right)^T \cdot \fatv,
  \end{equation*}
  the multidimensional version of the Central Limit Theorem immediately gives
  \begin{equation*}
    \sqrt{n} \left(\frac{1}{n} X^T Y \fatv - \fatmu \right) \overset{\mathcal{D}}{\longrightarrow} \mathcal{N}(0, \Omega),
  \end{equation*}
  where
  \begin{align}
    \fatmu &= \E\left( \fatx^{(k)} \cdot \left(\faty^{(k)}\right)^T \cdot \fatv \right),
    \label{eq:fatmu} \\
    \Omega &= \Cov\left( \fatx^{(k)} \cdot \left(\faty^{(k)}\right)^T \cdot \fatv \right),
    \label{eq:Omega}
  \end{align}
  for any $k \in\left\{ 1,2,\dots,n \right\}$.

  Now, it is left to prove that the entries of $\fatmu$ and $\Omega$ have the form specified in equations (\ref{eq:XtYv_mean}) and (\ref{eq:XtYv_covariance}).

  Equation (\ref{eq:XtYv_mean}) follows directly from the linearity of expectation and the fact that all entries of $X$ and $Y$ are Gaussian with mean zero.

For notational convenience, denote $A := \begin{bmatrix} X & Y \end{bmatrix}$ and $\fatw := \frac{1}{n} X^T Y \fatv$. In order to prove (\ref{eq:XtYv_covariance}), we first observe that
\begin{equation*}
  A^T A \sim W(\Sigma, p_X + p_Y, n),
\end{equation*}
which denotes a Wishart distribution with parameters $\Sigma$, $(p_X + p_Y)$ and $n$, where $\Sigma$ is defined by equation (\ref{eq:definition_of_Sigma}). We also observe that the entries of $\fatw$ can be written as
  \begin{equation*}
     w_i = \tr\left( \frac{1}{2n} \begin{bmatrix} 0 & \fate_i \fatv^T \\ \fatv \fate_i^T & 0 \end{bmatrix} \cdot A^T A \right),
  \end{equation*}
  where $\fate_i \in \R^p$ denotes the $i$th standard basis vector in $\R^{p_X}$.

Thus, using Lemma~\ref{lemma:covariance_of_traces}, we conclude that
\begin{small}
\begin{align}
  &\Cov(w_i, w_j) = \nonumber \\
  &= 2n \cdot \tr\left(
  \frac{1}{2n} \begin{bmatrix}
    0 & \fate_i \fatv^T \\
    \fatv \fate_i^T & 0 \end{bmatrix}
  \cdot
  \Sigma
  \cdot
  \frac{1}{2n} \begin{bmatrix}
    0 & \fate_j \fatv^T \\
    \fatv \fate_j^T & 0 \end{bmatrix}
  \cdot
  \Sigma
  \right) \nonumber \\
  &= \frac{1}{2n} \tr\left(
  \begin{bmatrix}
    \fate_i \fatv^T \Sigma_{XY}^T & \fate_i \fatv^T \Sigma_Y \\
    \fatv \fate_i^T \Sigma_X & \fatv \fate_i^T \Sigma_{XY}
  \end{bmatrix}
  \cdot
  \begin{bmatrix}
    \fate_j \fatv^T \Sigma_{XY}^T & \fate_j \fatv^T \Sigma_Y \\
    \fatv \fate_j^T \Sigma_X & \fatv \fate_j^T \Sigma_{XY}
  \end{bmatrix}
  \right) \nonumber \\
  &= \frac{1}{2n} \left\{
    \tr\left( \fate_i \fatv^T \Sigma_{XY}^T \fate_j \fatv^T \Sigma_{XY}^T \right)
    +
    \tr\left( \fate_i \fatv^T \Sigma_Y \fatv \fate_j^T \Sigma_X \right)
    \right. \nonumber \\
    &\quad\quad \left. +
    \tr\left( \fatv \fate_i^T \Sigma_X \fate_j \fatv^T \Sigma_Y \right)
    +
    \tr\left( \fatv \fate_i^T \Sigma_{XY} \fatv \fate_j^T \Sigma_{XY} \right)
  \right\} \nonumber \\
  &= \frac{1}{2n} \left\{
    2 \tr\left( \fate_i \fatv^T \Sigma_{XY}^T \fate_j \fatv^T \Sigma_{XY}^T \right)
    +
    2 \tr\left( \fatv^T \Sigma_Y \fatv \fate_j^T \Sigma_X \fate_i \right)
  \right\} \nonumber \\
  &= \frac{1}{n} \left\{
  \left( \sum_{k=1}^{p_Y} v_k \rho^{XY}_{j,k} \right) \left( \sum_{k=1}^{p_Y} v_k \rho^{XY}_{i,k} \right)
    +
    \fatv^T \Sigma_Y \fatv \rho^X_{j,i}
  \right\}
  \label{eq:proof_of_covariance_of_XtYv}
\end{align}
\end{small}
Moreover, we have that
\begin{align}
  \Cov(\fatw)
  &= \Cov\left( \frac{1}{n} \sum_{k=1}^n \fatx^{(k)} \cdot \left(\faty^{(k)}\right)^T \cdot \fatv \right) \nonumber \\
  &= \frac{1}{n^2} \sum_{k=1}^n \Cov\left( \fatx^{(k)} \cdot \left(\faty^{(k)}\right)^T \cdot \fatv \right)
  = \frac{1}{n} \Omega,
  \label{eq:cov_fatw}
\end{align}
where we used equation (\ref{eq:Omega}) and the fact that the random vectors
$\left(
  \fatx^{(1)},
  \faty^{(1)}
\right)$,
$\left(
  \fatx^{(2)},
  \faty^{(2)}
\right)$, $\dots$,
$\left(
  \fatx^{(n)},
  \faty^{(n)}
\right)$ are independent and identically distributed.

Finally, equation (\ref{eq:proof_of_covariance_of_XtYv}) combined with equation (\ref{eq:cov_fatw}) yield equation (\ref{eq:XtYv_covariance}).
\end{proof}

% use section* for acknowledgment
\section*{Acknowledgment}

The work was partially supported by NIH (R01 GM109068, R01 MH104680, R01 MH107354, P20 GM103472, R01 REB020407, R01 EB006841) and NSF (\#1539067).

% Can use something like this to put references on a page
% by themselves when using endfloat and the captionsoff option.
\ifCLASSOPTIONcaptionsoff
  \newpage
\fi

% trigger a \newpage just before the given reference
% number - used to balance the columns on the last page
% adjust value as needed - may need to be readjusted if
% the document is modified later
%\IEEEtriggeratref{8}
% The "triggered" command can be changed if desired:
%\IEEEtriggercmd{\enlargethispage{-5in}}

% references section

% can use a bibliography generated by BibTeX as a .bbl file
% BibTeX documentation can be easily obtained at:
% http://mirror.ctan.org/biblio/bibtex/contrib/doc/
% The IEEEtran BibTeX style support page is at:
% http://www.michaelshell.org/tex/ieeetran/bibtex/
\bibliographystyle{IEEEtran}
% argument is your BibTeX string definitions and bibliography database(s)
\bibliography{IEEEabrv,FDRcorrectedCCA}

% Generated by IEEEtran.bst, version: 1.14 (2015/08/26)
\begin{thebibliography}{10}
\providecommand{\url}[1]{#1}
\csname url@samestyle\endcsname
\providecommand{\newblock}{\relax}
\providecommand{\bibinfo}[2]{#2}
\providecommand{\BIBentrySTDinterwordspacing}{\spaceskip=0pt\relax}
\providecommand{\BIBentryALTinterwordstretchfactor}{4}
\providecommand{\BIBentryALTinterwordspacing}{\spaceskip=\fontdimen2\font plus
\BIBentryALTinterwordstretchfactor\fontdimen3\font minus
  \fontdimen4\font\relax}
\providecommand{\BIBforeignlanguage}[2]{{%
\expandafter\ifx\csname l@#1\endcsname\relax
\typeout{** WARNING: IEEEtran.bst: No hyphenation pattern has been}%
\typeout{** loaded for the language `#1'. Using the pattern for}%
\typeout{** the default language instead.}%
\else
\language=\csname l@#1\endcsname
\fi
#2}}
\providecommand{\BIBdecl}{\relax}
\BIBdecl

\bibitem{Hotelling1936}
H.~Hotelling, ``{Relations between two sets of variates},'' \emph{Biometrika},
  1936.

\bibitem{Witten2009}
D.~M. Witten, R.~Tibshirani, and T.~Hastie, ``\BIBforeignlanguage{en}{{A
  penalized matrix decomposition, with applications to sparse principal
  components and canonical correlation analysis}},''
  \emph{\BIBforeignlanguage{en}{Biostatistics}}, vol.~10, no.~3, pp. 515--534,
  Jul. 2009.

\bibitem{Witten2009extensions}
D.~M. Witten and R.~J. Tibshirani, ``\BIBforeignlanguage{en}{{Extensions of
  sparse canonical correlation analysis with applications to genomic data}},''
  \emph{\BIBforeignlanguage{en}{Statistical applications in genetics and
  molecular biology}}, vol.~8, 9~Jun. 2009.

\bibitem{Parkhomenko2009}
E.~Parkhomenko, D.~Tritchler, and J.~Beyene, ``\BIBforeignlanguage{en}{{Sparse
  canonical correlation analysis with application to genomic data
  integration}},'' \emph{\BIBforeignlanguage{en}{Statistical applications in
  genetics and molecular biology}}, vol.~8, 6~Jan. 2009.

\bibitem{Le_Cao2009-nk}
K.-A. L{\^e}~Cao, P.~G.~P. Martin, C.~Robert-Grani{\'e}, and P.~Besse,
  ``\BIBforeignlanguage{en}{{Sparse canonical methods for biological data
  integration: application to a cross-platform study}},''
  \emph{\BIBforeignlanguage{en}{BMC bioinformatics}}, vol.~10, p.~34, 26~Jan.
  2009.

\bibitem{Waaijenborg2008-qy}
S.~Waaijenborg, P.~C. Verselewel~de Witt~Hamer, and A.~H. Zwinderman,
  ``\BIBforeignlanguage{en}{{Quantifying the association between gene
  expressions and DNA-markers by penalized canonical correlation analysis}},''
  \emph{\BIBforeignlanguage{en}{Statistical applications in genetics and
  molecular biology}}, vol.~7, no.~1, 23~Jan. 2008.

\bibitem{Lin2014-he}
D.~Lin, V.~D. Calhoun, and Y.-P. Wang,
  ``\BIBforeignlanguage{en}{{Correspondence between fMRI and SNP data by group
  sparse canonical correlation analysis}},''
  \emph{\BIBforeignlanguage{en}{Medical image analysis}}, vol.~18, no.~6, pp.
  891--902, Aug. 2014.

\bibitem{Du2016-iu}
L.~Du, H.~Huang, J.~Yan, S.~Kim, S.~L. Risacher, M.~Inlow, J.~H. Moore, A.~J.
  Saykin, L.~Shen, and {Alzheimer’s Disease Neuroimaging Initiative},
  ``\BIBforeignlanguage{en}{{Structured sparse canonical correlation analysis
  for brain imaging genetics: an improved GraphNet method}},''
  \emph{\BIBforeignlanguage{en}{Bioinformatics}}, vol.~32, no.~10, pp.
  1544--1551, 15~May 2016.

\bibitem{Wilms2015-ra}
I.~Wilms and C.~Croux, ``\BIBforeignlanguage{en}{{Sparse canonical correlation
  analysis from a predictive point of view}},''
  \emph{\BIBforeignlanguage{en}{Biometrical journal. Biometrische
  Zeitschrift}}, vol.~57, no.~5, pp. 834--851, Sep. 2015.

\bibitem{Benjamini1995-qd}
Y.~Benjamini and Y.~Hochberg, ``{Controlling the False Discovery Rate: A
  Practical and Powerful Approach to Multiple Testing},'' \emph{Journal of the
  Royal Statistical Society. Series B, Statistical methodology}, vol.~57,
  no.~1, pp. 289--300, 1995.

\bibitem{Johnson2002-jx}
R.~A. Johnson and D.~W. Wichern, \emph{{Applied multivariate statistical
  analysis}}, 6th~ed.\hskip 1em plus 0.5em minus 0.4em\relax Prentice hall
  Upper Saddle River, NJ, 2007.

\bibitem{Satterthwaite2014-jy}
T.~D. Satterthwaite, M.~A. Elliott, K.~Ruparel, J.~Loughead, K.~Prabhakaran,
  M.~E. Calkins, R.~Hopson, C.~Jackson, J.~Keefe, M.~Riley \emph{et~al.},
  ``\BIBforeignlanguage{en}{{Neuroimaging of the Philadelphia
  neurodevelopmental cohort}},'' \emph{\BIBforeignlanguage{en}{NeuroImage}},
  vol.~86, pp. 544--553, 1~Feb. 2014.

\bibitem{Ashburner2014-xn}
J.~Ashburner, G.~Barnes, C.-C. Chen, J.~Daunizeau, G.~Flandin, K.~Friston,
  S.~Kiebel, J.~Kilner, V.~Litvak, R.~Moran \emph{et~al.}, ``{SPM12 Manual The
  FIL Methods Group (and honorary members)},'' 2014.

\bibitem{Purcell2007-bx}
S.~Purcell, B.~Neale, K.~Todd-Brown, L.~Thomas, M.~A.~R. Ferreira, D.~Bender,
  J.~Maller, P.~Sklar, P.~I.~W. de~Bakker, M.~J. Daly, and P.~C. Sham,
  ``\BIBforeignlanguage{en}{{PLINK: a tool set for whole-genome association and
  population-based linkage analyses}},'' \emph{\BIBforeignlanguage{en}{American
  journal of human genetics}}, vol.~81, no.~3, pp. 559--575, Sep. 2007.

\bibitem{Das2016-ei}
\BIBentryALTinterwordspacing
S.~Das, L.~Forer, S.~Sch{\"o}nherr, C.~Sidore, A.~E. Locke, A.~Kwong, S.~I.
  Vrieze, E.~Y. Chew, S.~Levy, M.~McGue \emph{et~al.},
  ``\BIBforeignlanguage{en}{{Next-generation genotype imputation service and
  methods}},'' \emph{\BIBforeignlanguage{en}{Nature genetics}}, vol.~48,
  no.~10, pp. 1284--1287, Oct. 2016. [Online]. Available:
  \url{http://dx.doi.org/10.1038/ng.3656}
\BIBentrySTDinterwordspacing

\bibitem{Price2006-fn}
\BIBentryALTinterwordspacing
A.~L. Price, N.~J. Patterson, R.~M. Plenge, M.~E. Weinblatt, N.~A. Shadick, and
  D.~Reich, ``\BIBforeignlanguage{en}{{Principal components analysis corrects
  for stratification in genome-wide association studies}},''
  \emph{\BIBforeignlanguage{en}{Nature genetics}}, vol.~38, no.~8, pp.
  904--909, Aug. 2006. [Online]. Available:
  \url{http://dx.doi.org/10.1038/ng1847}
\BIBentrySTDinterwordspacing

\bibitem{Leek2010-ct}
\BIBentryALTinterwordspacing
J.~T. Leek, R.~B. Scharpf, H.~C. Bravo, D.~Simcha, B.~Langmead, W.~E. Johnson,
  D.~Geman, K.~Baggerly, and R.~A. Irizarry,
  ``\BIBforeignlanguage{en}{{Tackling the widespread and critical impact of
  batch effects in high-throughput data}},''
  \emph{\BIBforeignlanguage{en}{Nature reviews. Genetics}}, vol.~11, no.~10,
  pp. 733--739, Oct. 2010. [Online]. Available:
  \url{http://dx.doi.org/10.1038/nrg2825}
\BIBentrySTDinterwordspacing

\bibitem{Kaufmann2017-kp}
T.~Kaufmann, D.~Aln{\ae}s, N.~T. Doan, C.~L. Brandt, O.~A. Andreassen, and
  L.~T. Westlye, ``\BIBforeignlanguage{en}{{Delayed stabilization and
  individualization in connectome development are related to psychiatric
  disorders}},'' \emph{\BIBforeignlanguage{en}{Nature neuroscience}}, vol.~20,
  no.~4, pp. 513--515, Apr. 2017.

\bibitem{Tzourio-Mazoyer2002-mw}
N.~Tzourio-Mazoyer, B.~Landeau, D.~Papathanassiou, F.~Crivello, O.~Etard,
  N.~Delcroix, B.~Mazoyer, and M.~Joliot, ``\BIBforeignlanguage{en}{{Automated
  anatomical labeling of activations in SPM using a macroscopic anatomical
  parcellation of the MNI MRI single-subject brain}},''
  \emph{\BIBforeignlanguage{en}{NeuroImage}}, vol.~15, no.~1, pp. 273--289,
  Jan. 2002.

\bibitem{Radua2010-xl}
\BIBentryALTinterwordspacing
J.~Radua, M.~L. Phillips, T.~Russell, N.~Lawrence, N.~Marshall, S.~Kalidindi,
  W.~El-Hage, C.~McDonald, V.~Giampietro, M.~J. Brammer \emph{et~al.},
  ``\BIBforeignlanguage{en}{{Neural response to specific components of fearful
  faces in healthy and schizophrenic adults}},''
  \emph{\BIBforeignlanguage{en}{NeuroImage}}, vol.~49, no.~1, pp. 939--946,
  Jan. 2010. [Online]. Available:
  \url{http://dx.doi.org/10.1016/j.neuroimage.2009.08.030}
\BIBentrySTDinterwordspacing

\bibitem{Iidaka2002-ba}
\BIBentryALTinterwordspacing
T.~Iidaka, T.~Okada, T.~Murata, M.~Omori, H.~Kosaka, N.~Sadato, and
  Y.~Yonekura, ``\BIBforeignlanguage{en}{{Age-related differences in the medial
  temporal lobe responses to emotional faces as revealed by fMRI}},''
  \emph{\BIBforeignlanguage{en}{Hippocampus}}, vol.~12, no.~3, pp. 352--362,
  2002. [Online]. Available: \url{http://dx.doi.org/10.1002/hipo.1113}
\BIBentrySTDinterwordspacing

\bibitem{Bush2000-pp}
\BIBentryALTinterwordspacing
G.~Bush, P.~Luu, and M.~I. Posner, ``\BIBforeignlanguage{en}{{Cognitive and
  emotional influences in anterior cingulate cortex}},''
  \emph{\BIBforeignlanguage{en}{Trends in cognitive sciences}}, vol.~4, no.~6,
  pp. 215--222, Jun. 2000. [Online]. Available:
  \url{https://www.ncbi.nlm.nih.gov/pubmed/10827444}
\BIBentrySTDinterwordspacing

\bibitem{Etkin2011-gw}
\BIBentryALTinterwordspacing
A.~Etkin, T.~Egner, and R.~Kalisch, ``\BIBforeignlanguage{en}{{Emotional
  processing in anterior cingulate and medial prefrontal cortex}},''
  \emph{\BIBforeignlanguage{en}{Trends in cognitive sciences}}, vol.~15, no.~2,
  pp. 85--93, Feb. 2011. [Online]. Available:
  \url{http://dx.doi.org/10.1016/j.tics.2010.11.004}
\BIBentrySTDinterwordspacing

\bibitem{Bigler2007-hg}
\BIBentryALTinterwordspacing
E.~D. Bigler, S.~Mortensen, E.~S. Neeley, S.~Ozonoff, L.~Krasny, M.~Johnson,
  J.~Lu, S.~L. Provencal, W.~McMahon, and J.~E. Lainhart,
  ``\BIBforeignlanguage{en}{{Superior temporal gyrus, language function, and
  autism}},'' \emph{\BIBforeignlanguage{en}{Developmental neuropsychology}},
  vol.~31, no.~2, pp. 217--238, 2007. [Online]. Available:
  \url{http://dx.doi.org/10.1080/87565640701190841}
\BIBentrySTDinterwordspacing

\bibitem{Morris1998-ht}
\BIBentryALTinterwordspacing
J.~S. Morris, K.~J. Friston, C.~B{\"u}chel, C.~D. Frith, A.~W. Young, A.~J.
  Calder, and R.~J. Dolan, ``\BIBforeignlanguage{en}{{A neuromodulatory role
  for the human amygdala in processing emotional facial expressions}},''
  \emph{\BIBforeignlanguage{en}{Brain: a journal of neurology}}, vol. 121 ( Pt
  1), pp. 47--57, Jan. 1998. [Online]. Available:
  \url{https://www.ncbi.nlm.nih.gov/pubmed/9549487}
\BIBentrySTDinterwordspacing

\bibitem{Fatemi2005-gb}
S.~H. Fatemi, A.~V. Snow, J.~M. Stary, M.~Araghi-Niknam, T.~J. Reutiman,
  S.~Lee, A.~I. Brooks, and D.~A. Pearce, ``\BIBforeignlanguage{en}{{Reelin
  signaling is impaired in autism}},'' \emph{\BIBforeignlanguage{en}{Biological
  psychiatry}}, vol.~57, no.~7, pp. 777--787, 1~Apr. 2005.

\bibitem{Li2013-zn}
J.~Li, J.~Liu, L.~Zhao, Y.~Ma, M.~Jia, T.~Lu, Y.~Ruan, Q.~Li, W.~Yue, D.~Zhang,
  and L.~Wang, ``\BIBforeignlanguage{en}{{Association study between genes in
  Reelin signaling pathway and autism identifies DAB1 as a susceptibility gene
  in a Chinese Han population}},'' \emph{\BIBforeignlanguage{en}{Progress in
  neuro-psychopharmacology \& biological psychiatry}}, vol.~44, pp. 226--232,
  1~Jul. 2013.

\bibitem{Impagnatiello1998-bl}
F.~Impagnatiello, A.~R. Guidotti, C.~Pesold, Y.~Dwivedi, H.~Caruncho, M.~G.
  Pisu, D.~P. Uzunov, N.~R. Smalheiser, J.~M. Davis, G.~N. Pandey
  \emph{et~al.}, ``\BIBforeignlanguage{en}{{A decrease of reelin expression as
  a putative vulnerability factor in schizophrenia}},''
  \emph{\BIBforeignlanguage{en}{Proceedings of the National Academy of Sciences
  of the United States of America}}, vol.~95, no.~26, pp. 15\,718--15\,723,
  22~Dec. 1998.

\bibitem{Verbrugghe2012-id}
P.~Verbrugghe, S.~Bouwer, S.~Wiltshire, K.~Carter, D.~Chandler, M.~Cooper,
  B.~Morar, M.~F.~M. Razif, A.~Henders, J.~C. Badcock \emph{et~al.},
  ``\BIBforeignlanguage{en}{{Impact of the Reelin signaling cascade
  (ligands-receptors-adaptor complex) on cognition in schizophrenia}},''
  \emph{\BIBforeignlanguage{en}{American journal of medical genetics. Part B,
  Neuropsychiatric genetics: the official publication of the International
  Society of Psychiatric Genetics}}, vol. 159B, no.~4, pp. 392--404, Jun. 2012.

\bibitem{Herrick2004-dg}
T.~M. Herrick and J.~A. Cooper, ``\BIBforeignlanguage{en}{{High affinity
  binding of Dab1 to Reelin receptors promotes normal positioning of upper
  layer cortical plate neurons}},'' \emph{\BIBforeignlanguage{en}{Brain
  research. Molecular brain research}}, vol. 126, no.~2, pp. 121--128, 26~Jul.
  2004.

\bibitem{Pramatarova2008-xf}
A.~Pramatarova, K.~Chen, and B.~W. Howell, ``\BIBforeignlanguage{en}{{A genetic
  interaction between the APP and Dab1 genes influences brain development}},''
  \emph{\BIBforeignlanguage{en}{Molecular and cellular neurosciences}},
  vol.~37, no.~1, pp. 178--186, Jan. 2008.

\bibitem{Feng2009-to}
L.~Feng and J.~A. Cooper, ``\BIBforeignlanguage{en}{{Dual functions of Dab1
  during brain development}},'' \emph{\BIBforeignlanguage{en}{Molecular and
  cellular biology}}, vol.~29, no.~2, pp. 324--332, Jan. 2009.

\bibitem{Marzinke2013-nq}
M.~A. Marzinke, T.~Mavencamp, J.~Duratinsky, and M.~Clagett-Dame,
  ``\BIBforeignlanguage{en}{{14-3-3$\epsilon$ and NAV2 interact to regulate
  neurite outgrowth and axon elongation}},''
  \emph{\BIBforeignlanguage{en}{Archives of biochemistry and biophysics}}, vol.
  540, no. 1-2, pp. 94--100, Dec. 2013.

\bibitem{McClay2011-gz}
J.~L. McClay, D.~E. Adkins, K.~Aberg, J.~Buksz{\'a}r, A.~N. Khachane, R.~S.~E.
  Keefe, D.~O. Perkins, J.~P. McEvoy, T.~S. Stroup, R.~E. Vann \emph{et~al.},
  ``\BIBforeignlanguage{en}{{Genome-wide pharmacogenomic study of
  neurocognition as an indicator of antipsychotic treatment response in
  schizophrenia}},'' \emph{\BIBforeignlanguage{en}{Neuropsychopharmacology:
  official publication of the American College of Neuropsychopharmacology}},
  vol.~36, no.~3, pp. 616--626, Feb. 2011.

\bibitem{Tabarki2015-cb}
B.~Tabarki, F.~Al~Mutairi, and A.~Al~Hashem, ``\BIBforeignlanguage{en}{{The
  fragile site WWOX gene and the developing brain}},''
  \emph{\BIBforeignlanguage{en}{Experimental biology and medicine}}, vol. 240,
  no.~3, pp. 400--402, Mar. 2015.

\bibitem{Alarcon2008-ke}
\BIBentryALTinterwordspacing
M.~Alarc{\'o}n, B.~S. Abrahams, J.~L. Stone, J.~A. Duvall, J.~V. Perederiy,
  J.~M. Bomar, J.~Sebat, M.~Wigler, C.~L. Martin, D.~H. Ledbetter
  \emph{et~al.}, ``\BIBforeignlanguage{en}{{Linkage, association, and
  gene-expression analyses identify CNTNAP2 as an autism-susceptibility
  gene}},'' \emph{\BIBforeignlanguage{en}{American journal of human genetics}},
  vol.~82, no.~1, pp. 150--159, Jan. 2008. [Online]. Available:
  \url{http://dx.doi.org/10.1016/j.ajhg.2007.09.005}
\BIBentrySTDinterwordspacing

\bibitem{Penagarikano2012-lq}
\BIBentryALTinterwordspacing
O.~Pe{\~n}agarikano and D.~H. Geschwind, ``\BIBforeignlanguage{en}{{What does
  CNTNAP2 reveal about autism spectrum disorder?}}''
  \emph{\BIBforeignlanguage{en}{Trends in molecular medicine}}, vol.~18, no.~3,
  pp. 156--163, Mar. 2012. [Online]. Available:
  \url{http://dx.doi.org/10.1016/j.molmed.2012.01.003}
\BIBentrySTDinterwordspacing

\bibitem{Dennis2011-ye}
\BIBentryALTinterwordspacing
E.~L. Dennis, N.~Jahanshad, J.~D. Rudie, J.~A. Brown, K.~Johnson, K.~L.
  McMahon, G.~I. de~Zubicaray, G.~Montgomery, N.~G. Martin, M.~J. Wright
  \emph{et~al.}, ``\BIBforeignlanguage{en}{{Altered structural brain
  connectivity in healthy carriers of the autism risk gene, CNTNAP2}},''
  \emph{\BIBforeignlanguage{en}{Brain connectivity}}, vol.~1, no.~6, pp.
  447--459, 2011. [Online]. Available:
  \url{http://dx.doi.org/10.1089/brain.2011.0064}
\BIBentrySTDinterwordspacing

\bibitem{Tan2010-fp}
\BIBentryALTinterwordspacing
G.~C.~Y. Tan, T.~F. Doke, J.~Ashburner, N.~W. Wood, and R.~S.~J. Frackowiak,
  ``\BIBforeignlanguage{en}{{Normal variation in fronto-occipital circuitry and
  cerebellar structure with an autism-associated polymorphism of CNTNAP2}},''
  \emph{\BIBforeignlanguage{en}{NeuroImage}}, vol.~53, no.~3, pp. 1030--1042,
  Nov. 2010. [Online]. Available:
  \url{http://dx.doi.org/10.1016/j.neuroimage.2010.02.018}
\BIBentrySTDinterwordspacing

\bibitem{Ji2013-fx}
\BIBentryALTinterwordspacing
W.~Ji, T.~Li, Y.~Pan, H.~Tao, K.~Ju, Z.~Wen, Y.~Fu, Z.~An, Q.~Zhao, T.~Wang
  \emph{et~al.}, ``\BIBforeignlanguage{en}{{CNTNAP2 is significantly associated
  with schizophrenia and major depression in the Han Chinese population}},''
  \emph{\BIBforeignlanguage{en}{Psychiatry research}}, vol. 207, no.~3, pp.
  225--228, May 2013. [Online]. Available:
  \url{http://dx.doi.org/10.1016/j.psychres.2012.09.024}
\BIBentrySTDinterwordspacing

\bibitem{Whalley2011-cj}
\BIBentryALTinterwordspacing
H.~C. Whalley, G.~O'Connell, J.~E. Sussmann, A.~Peel, A.~C. Stanfield, M.~E.
  Hayiou-Thomas, E.~C. Johnstone, S.~M. Lawrie, A.~M. McIntosh, and J.~Hall,
  ``\BIBforeignlanguage{en}{{Genetic variation in CNTNAP2 alters brain function
  during linguistic processing in healthy individuals}},''
  \emph{\BIBforeignlanguage{en}{American journal of medical genetics. Part B,
  Neuropsychiatric genetics: the official publication of the International
  Society of Psychiatric Genetics}}, vol. 156B, no.~8, pp. 941--948, Dec. 2011.
  [Online]. Available: \url{http://dx.doi.org/10.1002/ajmg.b.31241}
\BIBentrySTDinterwordspacing

\bibitem{Whitehouse2011-wf}
\BIBentryALTinterwordspacing
A.~J.~O. Whitehouse, D.~V.~M. Bishop, Q.~W. Ang, C.~E. Pennell, and S.~E.
  Fisher, ``\BIBforeignlanguage{en}{{CNTNAP2 variants affect early language
  development in the general population}},''
  \emph{\BIBforeignlanguage{en}{Genes, brain, and behavior}}, vol.~10, no.~4,
  pp. 451--456, Jun. 2011. [Online]. Available:
  \url{http://dx.doi.org/10.1111/j.1601-183X.2011.00684.x}
\BIBentrySTDinterwordspacing

\bibitem{Kuroda1999-zk}
\BIBentryALTinterwordspacing
S.~Kuroda, M.~Oyasu, M.~Kawakami, N.~Kanayama, K.~Tanizawa, N.~Saito, T.~Abe,
  S.~Matsuhashi, and K.~Ting, ``\BIBforeignlanguage{en}{{Biochemical
  characterization and expression analysis of neural thrombospondin-1-like
  proteins NELL1 and NELL2}},'' \emph{\BIBforeignlanguage{en}{Biochemical and
  biophysical research communications}}, vol. 265, no.~1, pp. 79--86, Nov.
  1999. [Online]. Available: \url{http://dx.doi.org/10.1006/bbrc.1999.1638}
\BIBentrySTDinterwordspacing

\bibitem{Ensslen-Craig2004-ta}
\BIBentryALTinterwordspacing
S.~E. Ensslen-Craig and S.~M. Brady-Kalnay, ``\BIBforeignlanguage{en}{{Receptor
  protein tyrosine phosphatases regulate neural development and axon
  guidance}},'' \emph{\BIBforeignlanguage{en}{Developmental biology}}, vol.
  275, no.~1, pp. 12--22, Nov. 2004. [Online]. Available:
  \url{http://dx.doi.org/10.1016/j.ydbio.2004.08.009}
\BIBentrySTDinterwordspacing

\bibitem{McAndrew1998-ez}
\BIBentryALTinterwordspacing
P.~E. McAndrew, A.~Frostholm, R.~A. White, A.~Rotter, and A.~H. Burghes,
  ``\BIBforeignlanguage{en}{{Identification and characterization of RPTP rho, a
  novel RPTP mu/kappa-like receptor protein tyrosine phosphatase whose
  expression is restricted to the central nervous system}},''
  \emph{\BIBforeignlanguage{en}{Brain research. Molecular brain research}},
  vol.~56, no. 1-2, pp. 9--21, May 1998. [Online]. Available:
  \url{https://www.ncbi.nlm.nih.gov/pubmed/9602027}
\BIBentrySTDinterwordspacing

\bibitem{Besco2006-id}
\BIBentryALTinterwordspacing
J.~A. Besco, R.~Hooft~van Huijsduijnen, A.~Frostholm, and A.~Rotter,
  ``\BIBforeignlanguage{en}{{Intracellular substrates of brain-enriched
  receptor protein tyrosine phosphatase rho (RPTPrho/PTPRT)}},''
  \emph{\BIBforeignlanguage{en}{Brain research}}, vol. 1116, no.~1, pp. 50--57,
  Oct. 2006. [Online]. Available:
  \url{http://dx.doi.org/10.1016/j.brainres.2006.07.122}
\BIBentrySTDinterwordspacing

\bibitem{Davis2010-lz}
O.~S.~P. Davis, L.~M. Butcher, S.~J. Docherty, E.~L. Meaburn, C.~J.~C. Curtis,
  M.~A. Simpson, L.~C. Schalkwyk, and R.~Plomin, ``\BIBforeignlanguage{en}{{A
  three-stage genome-wide association study of general cognitive ability:
  hunting the small effects}},'' \emph{\BIBforeignlanguage{en}{Behavior
  genetics}}, vol.~40, no.~6, pp. 759--767, Nov. 2010.

\bibitem{Corominas2014-cu}
R.~Corominas, X.~Yang, G.~N. Lin, S.~Kang, Y.~Shen, L.~Ghamsari, M.~Broly,
  M.~Rodriguez, S.~Tam, S.~A. Trigg \emph{et~al.},
  ``\BIBforeignlanguage{en}{{Protein interaction network of alternatively
  spliced isoforms from brain links genetic risk factors for autism}},''
  \emph{\BIBforeignlanguage{en}{Nature communications}}, vol.~5, p. 3650,
  11~Apr. 2014.

\bibitem{Salyakina2010-aw}
D.~Salyakina, D.~Q. Ma, J.~M. Jaworski, I.~Konidari, P.~L. Whitehead,
  R.~Henson, D.~Martinez, J.~L. Robinson, S.~Sacharow, H.~H. Wright
  \emph{et~al.}, ``\BIBforeignlanguage{en}{{Variants in several genomic regions
  associated with asperger disorder}},'' \emph{\BIBforeignlanguage{en}{Autism
  research: official journal of the International Society for Autism
  Research}}, vol.~3, no.~6, pp. 303--310, Dec. 2010.

\bibitem{Lasky-Su2008-ou}
J.~Lasky-Su, B.~M. Neale, B.~Franke, R.~J.~L. Anney, K.~Zhou, J.~B. Maller,
  A.~A. Vasquez, W.~Chen, P.~Asherson, J.~Buitelaar \emph{et~al.},
  ``\BIBforeignlanguage{en}{{Genome-wide association scan of quantitative
  traits for attention deficit hyperactivity disorder identifies novel
  associations and confirms candidate gene associations}},''
  \emph{\BIBforeignlanguage{en}{American journal of medical genetics. Part B,
  Neuropsychiatric genetics: the official publication of the International
  Society of Psychiatric Genetics}}, vol. 147B, no.~8, pp. 1345--1354, 5~Dec.
  2008.

\bibitem{Anney2010-ds}
R.~Anney, L.~Klei, D.~Pinto, R.~Regan, J.~Conroy, T.~R. Magalhaes, C.~Correia,
  B.~S. Abrahams, N.~Sykes, A.~T. Pagnamenta \emph{et~al.},
  ``\BIBforeignlanguage{en}{{A genome-wide scan for common alleles affecting
  risk for autism}},'' \emph{\BIBforeignlanguage{en}{Human molecular
  genetics}}, vol.~19, no.~20, pp. 4072--4082, 15~Oct. 2010.

\bibitem{Curran2011-ho}
S.~Curran, P.~Bolton, K.~Rozsnyai, A.~Chiocchetti, S.~M. Klauck, E.~Duketis,
  F.~Poustka, S.~Schlitt, C.~M. Freitag, I.~Lee \emph{et~al.},
  ``\BIBforeignlanguage{en}{{No association between a common single nucleotide
  polymorphism, rs4141463, in the MACROD2 gene and autism spectrum
  disorder}},'' \emph{\BIBforeignlanguage{en}{American journal of medical
  genetics. Part B, Neuropsychiatric genetics: the official publication of the
  International Society of Psychiatric Genetics}}, vol. 156B, no.~6, pp.
  633--639, Sep. 2011.

\bibitem{Jones2014-rp}
R.~M. Jones, G.~Cadby, J.~Blangero, L.~J. Abraham, A.~J.~O. Whitehouse, and
  E.~K. Moses, ``\BIBforeignlanguage{en}{{MACROD2 gene associated with
  autistic-like traits in a general population sample}},''
  \emph{\BIBforeignlanguage{en}{Psychiatric genetics}}, vol.~24, no.~6, pp.
  241--248, Dec. 2014.

\bibitem{Poduslo2010-we}
S.~E. Poduslo, R.~Huang, and A.~Spiro, 3rd, ``\BIBforeignlanguage{en}{{A genome
  screen of successful aging without cognitive decline identifies LRP1B by
  haplotype analysis}},'' \emph{\BIBforeignlanguage{en}{American journal of
  medical genetics. Part B, Neuropsychiatric genetics: the official publication
  of the International Society of Psychiatric Genetics}}, vol. 153B, no.~1, pp.
  114--119, 5~Jan. 2010.

\bibitem{Hozumi2009-uf}
\BIBentryALTinterwordspacing
Y.~Hozumi, M.~Watanabe, K.~Otani, and K.~Goto,
  ``\BIBforeignlanguage{en}{{Diacylglycerol kinase beta promotes dendritic
  outgrowth and spine maturation in developing hippocampal neurons}},''
  \emph{\BIBforeignlanguage{en}{BMC neuroscience}}, vol.~10, p.~99, Aug. 2009.
  [Online]. Available: \url{http://dx.doi.org/10.1186/1471-2202-10-99}
\BIBentrySTDinterwordspacing

\bibitem{Caricasole2002-yw}
\BIBentryALTinterwordspacing
A.~Caricasole, E.~Bettini, C.~Sala, R.~Roncarati, N.~Kobayashi, F.~Caldara,
  K.~Goto, and G.~C. Terstappen, ``\BIBforeignlanguage{en}{{Molecular cloning
  and characterization of the human diacylglycerol kinase beta (DGKbeta) gene:
  alternative splicing generates DGKbeta isotypes with different
  properties}},'' \emph{\BIBforeignlanguage{en}{The Journal of biological
  chemistry}}, vol. 277, no.~7, pp. 4790--4796, Feb. 2002. [Online]. Available:
  \url{http://dx.doi.org/10.1074/jbc.M110249200}
\BIBentrySTDinterwordspacing

\bibitem{Kakefuda2010-mh}
\BIBentryALTinterwordspacing
K.~Kakefuda, A.~Oyagi, M.~Ishisaka, K.~Tsuruma, M.~Shimazawa, K.~Yokota,
  Y.~Shirai, K.~Horie, N.~Saito, J.~Takeda, and H.~Hara,
  ``\BIBforeignlanguage{en}{{Diacylglycerol kinase $\beta$ knockout mice
  exhibit lithium-sensitive behavioral abnormalities}},''
  \emph{\BIBforeignlanguage{en}{PloS one}}, vol.~5, no.~10, p. e13447, Oct.
  2010. [Online]. Available:
  \url{http://dx.doi.org/10.1371/journal.pone.0013447}
\BIBentrySTDinterwordspacing

\bibitem{Liu2010-tb}
\BIBentryALTinterwordspacing
J.~Z. Liu, A.~F. McRae, D.~R. Nyholt, S.~E. Medland, N.~R. Wray, K.~M. Brown,
  {AMFS Investigators}, N.~K. Hayward, G.~W. Montgomery, P.~M. Visscher
  \emph{et~al.}, ``\BIBforeignlanguage{en}{{A versatile gene-based test for
  genome-wide association studies}},'' \emph{\BIBforeignlanguage{en}{American
  journal of human genetics}}, vol.~87, no.~1, pp. 139--145, Jul. 2010.
  [Online]. Available: \url{http://dx.doi.org/10.1016/j.ajhg.2010.06.009}
\BIBentrySTDinterwordspacing

\bibitem{Eaton2007}
M.~L. Eaton, \emph{{Multivariate statistics : a vector space approach}}, ser.
  Lecture notes-monograph series ; v. 53.\hskip 1em plus 0.5em minus
  0.4em\relax Beachwood, Ohio: Institute of Mathematical Statistics, 2007.

\end{thebibliography}
%
% <OR> manually copy in the resultant .bbl file
% set second argument of \begin to the number of references
% (used to reserve space for the reference number labels box)
%\begin{thebibliography}{1}
%
%\bibitem{IEEEhowto:kopka}
%H.~Kopka and P.~W. Daly, \emph{A Guide to \LaTeX}, 3rd~ed.\hskip 1em plus
%  0.5em minus 0.4em\relax Harlow, England: Addison-Wesley, 1999.
%
%\end{thebibliography}

% that's all folks
\end{document}